\documentclass[a4paper,11pt]{amsart}
\usepackage{amsmath,amssymb}
\usepackage{mathrsfs}
\usepackage{eucal}
\usepackage{cases}
\usepackage{comment}

\makeatletter
 
  \@addtoreset{equation}{section}
 \makeatother

\newcommand{\slim}{\operatorname{s-}\hspace{-0.25cm}\lim_{t\to\pm\infty}}



\newtheorem{thm}{Theorem}[section]
\newtheorem{lem}[thm]{Lemma}
\newtheorem{prop}[thm]{Proposition}

\theoremstyle{definition}

\newtheorem{ass}[thm]{Assumption}

\theoremstyle{remark}
\newtheorem{rem}[thm]{Remark}

\newtheorem{exs}[thm]{Examples}


\title[Long-range discrete Schr\"odinger]{Long-range scattering for discrete Schr\"odinger operators}
\author{Yukihide TADANO}
\thanks{\noindent 
Keywords:long-range scattering theory, discrete Schr\"odinger operators, wave operators, Isozaki-Kitada (time-independent) modifiers\\
Mathematics subject classification:47A40, 47B39, 81U05\\
Graduate School of Mathematical Sciences, the University of Tokyo, 3-8-1 Komaba, Meguro, Tokyo, 153-8914, Japan \\
Research Fellow of Japan Society for the Promotion of Science \\
E-mail: {\tt tadano@ms.u-tokyo.ac.jp}}

\begin{document}
\begin{abstract}
In this paper, we define time-independent modifiers to construct a long-range scattering theory for a class of difference operators on $\mathbb{Z}^d$, including the discrete Schr\"odinger operators on the square lattice. The modifiers are constructed by observing the corresponding Hamilton flow on $T^*\mathbb{T}^d$. We prove the existence and completeness of modified wave operators in terms of the above mentioned time-independent modifiers.
\end{abstract}

\maketitle

\section{Introduction}

We consider a class of generalized discrete Schr\"odinger operators $H_0$ and $H$ on $\mathcal{H}=\ell^2(\mathbb{Z}^d)$, $d\geq1$,
\begin{align} \label{H_0,H}
\begin{cases}
H_0u[x]=\displaystyle\sum_{y\in\mathbb{Z}^d}f[y]u[x-y], \\
Hu[x]=H_0u[x]+V[x] u[x], 
\end{cases}
\end{align}
where $f \in \mathscr{S}(\mathbb{Z}^d):=\{u\in \ell^2(\mathbb{Z}^d) \mid u[x]=\mathcal{O}( \langle x \rangle^{-\infty}) \}$, $\langle x\rangle:=(1+|x|^2)^{\frac{1}{2}}$, satisfies $f[-x]=\overline{f[x]}$, $x \in \mathbb{Z}^d$, and  $V$ is a real-valued bounded function on $\mathbb{Z}^d$. Then $H_0$ and $H$ are bounded self-adjoint operators on $\mathcal{H}$.

We define the discrete Fourier transform $F$ by
\begin{align*}
Fu(\xi)=(2\pi)^{-\frac{d}{2}}\sum_{x\in\mathbb{Z}^d}e^{-ix\cdot \xi}u[x], \quad \xi \in \mathbb{T}^d=[-\pi,\pi)^d
\end{align*}
for $u\in \ell^1(\mathbb{Z}^d)$. Then $F$ is continuously extended to a unitary operator from $\mathcal{H}$ to $L^2(\mathbb{T}^d)$ and
\[
H_0u[x]=F^*\left(h_0(\cdot)Fu(\cdot)\right)[x],
\]
where
\begin{align}
h_0(\xi):=\sum_{x\in\mathbb{Z}^d}e^{-ix\cdot\xi}f[x], \quad \xi\in\mathbb{T}^d=[-\pi,\pi).
\end{align}
The above condition on $f$ implies $h_0$ is a real-valued smooth function on $\mathbb{T}^d$.
We denote by $v(\xi)$ and $A(\xi)$ the generalized velocity and the Hessian of $h_0$, respectively:
\begin{align*}
v(\xi)&=\nabla_\xi h_0(\xi), \\
A(\xi)&={}^t\nabla_\xi\nabla_\xi h_0(\xi)= (\partial_{\xi_j}\partial_{\xi_k} h_0(\xi))_{1\leq j,k \leq d}.
\end{align*}
The set of threshold energies is denoted by $\mathcal{T}$,
\begin{align*}
\mathcal{T}=\{ h_0(\xi) \mid \xi\in\mathbb{T}^d, v(\xi)=0 \} .
\end{align*}
We note $\mathcal{T}$ has Lebesgue measure $0$ by Sard's theorem.
We first assume the condition below.

\begin{ass}\label{ass1}
The sets $\{\xi\in\mathbb{T}^d \mid v(\xi) = 0 \}$ and $\{\xi\in\mathbb{T}^d \mid \det A(\xi)=0\}$ have $d$-dimensional Lebesgue measure zero.
\end{ass}

The above assumption implies the absence of point and singular continuous spectrum. The following assertion is a generalized version of Theorem 12.3.2 in \cite{O}.

\begin{prop}
Suppose that the set $\{\xi\in\mathbb{T}^d \mid v(\xi)=0\}$ has $d$-dimensional Lebesgue measure zero. Then $H_0$ has purely absolutely continuous spectrum and $\sigma_{\operatorname{ac}}(H_0)=h_0(\mathbb{T}^d)$, where $\sigma_{\operatorname{ac}}(H_0)$ denotes the absolutely continuous spectrum of $H_0$.
\end{prop}

\begin{proof}
Fix a point $\xi_0 \in W:=\{\xi\in\mathbb{T}^d \mid v(\xi)\neq0\}$. Then it suffices to prove $C_c^\infty(U) \subset \mathcal{H}_{ac}(F H_0 F^*)$ for some neighborhood $U \subset W$ of $\xi_0$; for any $f\in C_c^\infty(U)$,
\begin{align*}
\mathcal{B}(\sigma(H_0)) \to \mathbb{R},\ B \mapsto \int_{h_0^{-1}(B) \cap \operatorname{supp}u} |f(\xi)|^2 d\xi
\end{align*}
is an absolutely continuous Borel measure. The claim is proved by taking a local coordinate $U \ni x \mapsto (y(x),h_0(x)) \in \mathbb{R}^{d-1} \times \mathbb{R}.$
\end{proof}

If $V[x]$ decays at infinity, then $V$ is a compact operator on $\mathcal{H}$ and hence $
\sigma_{\operatorname{ess}}(H)=\sigma_{\operatorname{ess}}(H_0)=\sigma_{ac}(H_0)=h_0(\mathbb{T}^d)$,
where $\sigma_{\operatorname{ess}}(H)$ and $\sigma_{\operatorname{ess}}(H_0)$ denotes the essential spectrum of $H$ and $H_0$, respectively.
We suppose a long-range condition on $V$.


\begin{ass}\label{ass2}
There exist $\tilde V\in C^\infty({\mathbb{R}}^d;\mathbb{R})$ and $\varepsilon\in (0,1]$ such that $\tilde V\left|_{\mathbb{Z}^d}\right.=V$ and
\begin{align*}
|\partial_x^\alpha\tilde V(x)|\leq C_\alpha\langle x\rangle^{-|\alpha|-\varepsilon},& \quad x\in\mathbb{R}^d, \ \alpha\in\mathbb{Z}_{+}^d,
\end{align*}
where $\mathbb{Z}_{+}= \{ 0,1,2,\cdots \}$.
\end{ass}

Under Assumptions \ref{ass1} and \ref{ass2}, the singular continuous spectrum of $H$ is empty (see, e.g., \cite{N2}). In the following, we write $V$ for $\tilde V$ without confusion.

\begin{rem}
Assumption \ref{ass2} is equivalent to the following condition used in \cite{N},
\begin{align*}
|\tilde \partial_x^\alpha V\left[x\right]|\leq C_\alpha^\prime\langle x\rangle^{-|\alpha|-\varepsilon},& \quad x\in\mathbb{R}^d, \ \alpha\in\mathbb{Z}_{+}^d,
\end{align*}
where $\tilde \partial_x^\alpha=\tilde \partial_{x_1}^{\alpha_1}\cdots\tilde \partial_{x_d}^{\alpha_d}$, and $\tilde \partial_{x_j} V\left[x\right]=V\left[x\right]-V\left[x-e_j\right]$ is the difference operator with respect to the $j$-th variable. Here $\{e_j\}$ is the standard orthogonal basis of $\mathbb{R}^d$. See Lemma 2.1 in \cite{N} for the detail.
\end{rem}

In Section \ref{sec2}, we construct modified wave operators with time-independent modifiers, which are proposed by Isozaki and Kitada \cite{IsoKita}, so called Isozaki-Kitada modifiers. Isozaki-Kitada modifiers are formally defined by
\begin{align*}
W_J^\pm=\slim e^{itH}Je^{-itH_0}.
\end{align*}
We construct $J$ as an operator of the form
\begin{align} \label{J}
Ju\left[x\right]=(2\pi)^{-d}\int_{\mathbb{T}^d}\sum_{y\in\mathbb{Z}^d} e^{i(\varphi(x,\xi)-y\cdot\xi)}u[y] d\xi,
\end{align}
where the phase function $\varphi$ is a solution to the eikonal equation
\begin{align}\label{eikonal0}
h_0(\nabla_x \varphi(x,\xi))+V(x)=h_0(\xi)
\end{align}
in the ``outgoing'' and ``incoming'' regions and considered in Appendix \ref{appendix A}.

The next theorem is our main result.

\begin{thm}\label{main}
Under Assumptions \ref{ass1} and \ref{ass2}, there exists an operator $J$ of the form (\ref{J}) such that, for any $\Gamma\Subset h_0(\mathbb{T}^d)\backslash\mathcal{T}$, the modified wave operators
\begin{align}\label{mWO}
W_J^\pm(\Gamma):=\slim e^{itH}Je^{-itH_0}E_{H_0}(\Gamma)
\end{align}
exist, where $E_{H_0}$ denotes the spectral measure of $H_0$. Furthermore, the following properties hold:
\begin{enumerate}
\renewcommand{\labelenumi}{\roman{enumi})}
\item Intertwining property: $H W_J^\pm(\Gamma)=W_J^\pm(\Gamma) H_0$.
\item Partial isometries: $\|W_J^\pm(\Gamma) u\|=\|E_{H_0}(\Gamma)u\|$.
\item Asymptotic completeness: $\operatorname{Ran} W_J^\pm(\Gamma)=E_{H}(\Gamma)\mathcal{H}_{\operatorname{ac}}(H)$.
\end{enumerate}
\end{thm}

\begin{exs}
i) In \cite{N}, a long-range scattering theory of the standard difference Laplacian
$H_0u[x]=-\frac{1}{2}\sum_{|y-x|=1}u\left[y\right], \ x\in \mathbb{Z}^d$
is considered. In this case, $h_0(\xi)=-\sum_{j=1}^d \cos \xi_j$ satisfies Assumption \ref{ass1}.

ii) A model for $2$-dimensional triangle lattice is expressed by the operator
$H_0u[x]=-\frac{1}{6}\sum_{j=1}^6 u[x+n_j], \ x\in\mathbb{Z}^2,$
where $n_1=(1,0)$, $n_2=(-1,0)$, $n_3=(0,1)$, $n_4=(0,-1)$, $n_5=(1,-1)$, $n_6=(-1,1)$ (see, e.g., \cite{An-I-M}).
Since
\begin{align*}
h_0(\xi)=-\frac{1}{3}(\cos\xi_1+\cos\xi_2+\cos(\xi_1-\xi_2))
\end{align*}
in this case, Assumption \ref{ass1} is satisfied.
\end{exs}

Scattering theory for Schr\"odinger operators on $\mathbb{R}^d$ has been extensively studied (\cite{A-BdM-G}, \cite{D-G}, \cite{R-S}, \cite{Y}).
If the perturbation is long-range, i.e., $V(x)=O(\langle x \rangle^{-\varepsilon})$, $0<\varepsilon \leq 1$, then the scattering theory needs a modification (\cite{D-G}, \cite{IsoKita}, \cite{Y}).
Discrete Schr\"odinger operator describes the state of electrons in solid matters with graph structure. Spectral properties of discrete Schr\"odinger operators have been studied in \cite{An-I-M}, \cite{BdM-S}, \cite{I-Ko}, \cite{N}, \cite{N2}, \cite{P-R}.

The main idea of the construction of modifiers is similar to \cite{N}. We translate $H$ into an operator on the flat torus $\mathbb{T}^d$ via discrete Fourier transform and consider the corresponding classical mechanics on $\mathbb{T}^d$.
The proof is mainly based on \cite{IsoKita}. We use the time-decaying method to construct the phase function $\varphi$ in the definition of $J$, and then the stationary phase method and the Enss method to prove the existence and completeness of modified wave operators. The construction of $\varphi$ is given in Appendix \ref{appendix A}, which follows the argument of \cite{K}. The main properties of $\varphi$ is summarized in Proposition \ref{mainp}. In Section 2, we prepare some lemmas for the proof of Theorem \ref{main}. The Poisson summation formula is used to prove that pseudo-difference operators on $\mathbb{Z}^d$ are translated to pseudo-differential operators on $\mathbb{T}^d$ modulo smoothing operators (see the proof of Lemma \ref{lem2} in Appendix \ref{proof of lem}). This enables us to get over the difficulty derived from the discreteness of $\mathbb{Z}^d$.
 In Section 3, we prove Theorem \ref{main}.

\section*{acknowledgement}
The author would like to thank Professor Shu Nakamura, my Ph.D.\ advisor. This paper would not be completed without his advice.
The author is also grateful to Professor Hiroshi Isozaki for his kind discussion.

\section{Preliminaries} \label{sec2}
We first state a proposition on the Hamilton flow generated by $h(x,\xi):=h_0(\xi)+V(x)$, which is proved in Appendix \ref{appendix A}. Here we note that $h_0$, $v$ and $A$ are extended periodically in $\xi$ from $\mathbb{T}^d=[-\pi, \pi)^d$ to $\mathbb{R}^d$, and we identify integrations on $\mathbb{T}^d$ with those on $[-\pi, \pi)^d$.
We also note that the following proposition concerns functions on $\mathbb{R}^d\times\left(\mathbb{R}^d \backslash v^{-1}(0)\right)$, not on $\mathbb{Z}^d\times\left(\mathbb{T}^d \backslash v^{-1}(0)\right)$.

We fix $\chi\in C^\infty(\mathbb{R}^d)$ such that
\begin{align} \label{chi}
\chi(x)=
	\begin{cases}
	0 & \text{if } |x|\leq1 ,\\
	1 & \text{if } |x|\geq2,
	\end{cases}
\end{align}
and we define $\cos(x,y):=\frac{x \cdot y}{|x| |y|}$ for $x,y\in\mathbb{R}^d\backslash \{0\}$. The following assertion is an analogue of Theorem 2.5 in \cite{IsoKita}.

\begin{prop}\label{mainp}
There exists a real-valued function $\varphi\in C^\infty(\mathbb{R}^d\times(\mathbb{R}^d\backslash v^{-1}(0)))$ satisfying the following properties: Set $a>0$. Let $\varphi_a\in C^\infty(\mathbb{R}^d\times\mathbb{R}^d)$ be defined by
\begin{align}
\varphi_a(x,\xi)=(\varphi(x,\xi)-x\cdot\xi)\chi\left(\frac{v(\xi)}{a}\right)+x\cdot\xi .
\end{align}
$(1)$ 
The function $\varphi_a$ satisfies
\begin{align} \label{shuuki}
\varphi_a(x,\xi+2\pi m)&=\varphi_a(x,\xi)+2\pi x\cdot m, \quad m\in\mathbb{Z}^d, \\
|\partial_x^\alpha\partial_\xi^\beta\left[\varphi_a(x,\xi)-x\cdot\xi\right]|
&\leq C_{\alpha\beta,a} \langle x\rangle^{1-\varepsilon-|\alpha|}, \label{phase estimate} \\
|{}^t\nabla_x\nabla_\xi\varphi_a(x,\xi)-I|&<\frac{1}{2} \label{phase estimate2}
\end{align}
for $(x,\xi)\in\mathbb{R}^d\times\mathbb{R}^d$, where $|M|:=\left(\sum_{j,k=1}^d |M_{jk}|^2\right)^\frac{1}{2}$ for a matrix $M$.

\noindent $(2)$ 
We set
\begin{align}\label{J_a}
J_a u\left[x\right]:=(2\pi)^{-d}\int_{\mathbb{T}^d}\sum_{y\in\mathbb{Z}^d} e^{i(\varphi_a(x,\xi)-y\cdot\xi)}u[y]d\xi.
\end{align}
Then
\begin{align} \label{HJ_a-J_aH_0}
(HJ_a-J_aH_0)u\left[x\right]=(2\pi)^{-d}\int_{\mathbb{T}^d}\sum_{y\in\mathbb{Z}^d}e^{i(\varphi_a(x,\xi)-y\cdot\xi)}s_a(x,\xi)u\left[y\right]d\xi,
\end{align}
where
\begin{align}\label{s_a def}
s_a(x,\xi)&:=e^{-i\varphi_a(x,\xi)}H(e^{i\varphi_a(\cdot,\xi)})\left[x\right]-h_0(\xi) \\
\nonumber &=\sum_{z\in\mathbb{Z}^d} f[z] e^{i(\varphi_a(x-z,\xi)-\varphi_a(x,\xi))}+V\left[x\right]-h_0(\xi)
\end{align}
satisfies for $|x|\geq1$ and $|v(\xi)|\geq a$
\begin{align} \label{s_a}
|\partial_\xi^\beta s_a(x,\xi)|\leq \begin{cases}
C_{\beta,a}\langle x\rangle^{-1-\varepsilon}, & |\cos(x,v(\xi))|\geq \frac{1}{2}, \\
C_{\beta,a}\langle x\rangle^{-\varepsilon}, & |\cos(x,v(\xi))|\leq \frac{1}{2}.
\end{cases}
\end{align}
\end{prop}

We note that $\varphi_a$ satisfies the eikonal equation (\ref{eikonal0}) on $\{ (x,\xi) \mid |x|\geq R_a , |v(\xi)| \geq a , |\cos(x,v(\xi))|\geq \frac{1}{2} \}$ and that the property is used for the proof of (\ref{s_a}) in the $|\cos(x,v(\xi))|\geq \frac{1}{2}$ case (see Proposition \ref{eikonal prop} and (\ref{s2 estimate}).

In the rest of this section, we prepare some lemmas for the proof of properties ii) and iii).
We choose $\gamma\in C_c^\infty(h_0(\mathbb{T}^d)\backslash\mathcal{T})$ and $\rho_\pm\in C^\infty([-1,1];[0,1])$ such that
\begin{align*}
&\rho_+(\sigma)+\rho_-(\sigma)=1, \\
&\rho_+(\sigma)=1, \quad \sigma\in\left[\frac{1}{4},1\right], \\
&\rho_-(\sigma)=1, \quad \sigma\in\left[-1,-\frac{1}{4}\right].
\end{align*}
Using $\gamma$ and $\rho_\pm$, we define operators with cutoffs in the energy and the direction of $x$ and $v(\xi)$.
We set symbols $p_\pm$ and operators $P_\pm$, $\tilde P_\pm$ and $E_\pm(t)$ by
\begin{align}
p_\pm(y,\xi)&=\gamma(h_0(\xi))\chi(y)\rho_\pm(\cos(y,v(\xi))), \label{p+-} \\
P_\pm u\left[x\right]&=(2\pi)^{-d}\int_{\mathbb{T}^d}\sum_{y\in\mathbb{Z}^d} e^{i(x-y)\cdot\xi} p_\pm(y,\xi)u\left[y\right]d\xi, \\
\tilde P_\pm u\left[x\right]&=(2\pi)^{-d}\int_{\mathbb{T}^d}\sum_{y\in\mathbb{Z}^d} e^{i(x\cdot\xi-\varphi_a(y,\xi))}p_\pm(y,\xi)u\left[y\right]d\xi , \\
E_\pm(t)&=J_a e^{-itH_0}\tilde P_\pm , \quad t\in \mathbb{R}, \label{E_pm}
\end{align}
where $J_a$ is defined by (\ref{J_a}).
%

We consider properties of these operators.
We use the stationary phase method as in the pseudo-differential operator calculus (see, e.g., \cite{Z}).
The following two Lemmas correspond to Proposition 3.4 and Lemma 3.7 in \cite{IsoKita}, and
the proofs are given in Appendix \ref{proof of lem} (see also \cite{As-F} and \cite{IsoKita}).

\begin{lem}\label{lem}
$J_a$, $P_\pm$ and $\tilde P_\pm$ are bounded operators on $\mathcal{H}$.
\end{lem}


\begin{lem}\label{lem2}
$\gamma(H_0) - P_+ - P_-$, $P_\pm^*-P_\pm$, $E_\pm(0)-P_\pm$, $J_a^*J_a-I$ and $J_a J_a^*-I$ are compact operators on $\mathcal{H}$.
\end{lem}

The next lemma, corresponding to Proposition 3.8 in \cite{IsoKita}, is an analogue of the intertwining property of wave operators.

\begin{lem}\label{enssesti}
For any $s\in\mathbb{R}$,
\begin{align}
\slim e^{itH_0} J_a^* E_\pm(t-s)=e^{isH_0} \tilde P_\pm.
\end{align}
\end{lem}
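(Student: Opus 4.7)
The plan is to reduce the claim to a statement about a compact operator acting on a weakly null sequence. Substituting the definition $E_\pm(t)=J_a e^{-itH_0}\tilde P_\pm$ and using unitarity of $e^{-isH_0}$, I would write
\begin{align*}
e^{itH_0}J_a^*E_\pm(t-s)
&=e^{itH_0}J_a^*J_a e^{-i(t-s)H_0}\tilde P_\pm\\
&=e^{isH_0}\tilde P_\pm+e^{itH_0}(J_a^*J_a-I)e^{-i(t-s)H_0}\tilde P_\pm.
\end{align*}
The first term is already the claimed limit, so the whole task is to show that, for every $u\in\mathcal{H}$,
\[
\bigl\|(J_a^*J_a-I)\,e^{-i(t-s)H_0}\tilde P_\pm u\bigr\|\longrightarrow 0 \quad (t\to\pm\infty),
\]
which is preserved after applying the unitary $e^{itH_0}$.

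The two inputs for this are both already available. Lemma \ref{lem}(ii) asserts that $J_a^*J_a-I$ is a compact operator on $\mathcal H$. Independently, Assumption \ref{ass1} implies $H_0$ has purely absolutely continuous spectrum, so for any $\phi\in\mathcal H$ the spectral theorem together with the Riemann--Lebesgue lemma (applied to the absolutely continuous spectral measure associated with $\phi$) yields $e^{-i\tau H_0}\phi\rightharpoonup 0$ weakly as $\tau\to\pm\infty$. Applying this with $\phi=\tilde P_\pm u$ and $\tau=t-s$ gives
\[
e^{-i(t-s)H_0}\tilde P_\pm u\rightharpoonup 0 \quad (t\to\pm\infty).
\]
Since a compact operator sends weakly convergent sequences to norm convergent ones, the desired strong convergence follows.

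Combining these two steps finishes the proof. The only place that requires real care is invoking the weak convergence, which relies crucially on the absence of eigenvalues for $H_0$ that was noted right after Assumption \ref{ass1}; the rest is algebraic rewriting and a direct appeal to Lemma \ref{lem}(ii). I do not foresee any genuine obstacle beyond making sure the bookkeeping of the $s$-shift is done cleanly, since the time shift appears inside both the weakly-null factor and (harmlessly) in the outer unitary.
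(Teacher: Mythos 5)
Your proof is correct and is essentially identical to the paper's: the same algebraic rewriting of $e^{itH_0}J_a^*E_\pm(t-s)$ isolating the term $e^{itH_0}(J_a^*J_a-I)e^{-i(t-s)H_0}\tilde P_\pm$, followed by the same appeal to the compactness of $J_a^*J_a-I$ from Lemma \ref{lem}(ii) and to $\mathcal{H}_{\operatorname{ac}}(H_0)=\mathcal{H}$ to get weak convergence to zero. The paper merely states this more tersely; your spelled-out justification of the weak convergence via the Riemann--Lebesgue lemma is exactly what is implicit there.
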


\begin{proof}
The definition of $E_\pm(t)$ implies
\begin{align*}
e^{itH_0} J_a^* E_\pm(t-s)&=e^{itH_0} J_a^* J_a e^{-i(t-s)H_0} \tilde P_\pm \\
&=e^{itH_0} (J_a^* J_a-I) e^{-itH_0}e^{isH_0} \tilde P_\pm+e^{isH_0} \tilde P_\pm.
\end{align*}
Since $e^{-itH_0} u \to 0$ weakly as $t\to \pm\infty$ for any $u\in\mathcal{H}=\mathcal{H}_{\operatorname{ac}}(H_0)$, Lemma \ref{lem2}
implies that the first term converges strongly to 0 as $t\to\pm\infty$.
\end{proof}

Next we prove the norm convergence of $\lim_{t\to\pm\infty} e^{itH}E_\pm(t)$. If we set
\begin{align*}
G_\pm(t):=(\frac{d}{i d t}+H)E_\pm(t)=(H J_a - J_a H_0)E_\pm(t) ,
\end{align*}
then we have
\begin{align*}
e^{itH}E_\pm(t)-P_\pm=E_\pm(0)-P_\pm+i\int_0^t e^{i\tau H} G_\pm(\tau) d\tau .
\end{align*}
The following proposition is analogous to Theorem 3.5 in \cite{IsoKita}, and proves $G_\pm(t)$ is integrable in $\{ \pm t \geq 0\}$, respectively.

\begin{prop}\label{enss prop}
$G_\pm(t)$ is norm continuous and compact for any $t\in\mathbb{R}$. Furthermore, $G_\pm(t)$ satisfies
\begin{align}
\|G_\pm(t)\|\leq C\langle t\rangle^{-1-\varepsilon}, \ \pm t\geq 0. \label{G estimate}
\end{align}
In particular, $e^{itH}E_\pm(t)-P_\pm$ converges to a compact operator with respect to the norm topology as $t\to\pm\infty$, respectively.
\end{prop}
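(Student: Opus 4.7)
The plan is to realise $G_\pm(t)$ as an explicit oscillatory integral operator on $\mathbb{Z}^N\times\mathbb{Z}^N$ and extract both compactness and the decay bound (\ref{G estimate}) from a phase-space analysis in $\xi$, crucially using that the outgoing/incoming cutoff built into $\tilde P_\pm$ forces the classical trajectory $z+tv(\xi)$ to escape to infinity for $\pm t\geq 0$.

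First I would compute $G_\pm(t)$. Since $D_t e^{-itH_0}=-H_0 e^{-itH_0}$, we have $(D_t+H)E_\pm(t)=(HJ_a-J_a H_0)e^{-itH_0}\tilde P_\pm$; feeding in (\ref{HJ_a-J_aH_0}) together with the Fourier representation of $e^{-itH_0}\tilde P_\pm$ and collapsing the intermediate sum $\sum_z e^{iz\cdot(\xi-\eta)}$ into a Dirac comb on $\mathbb{T}^N$ yields
\[
G_\pm(t)u[x]=(2\pi)^{-N}\sum_{z\in\mathbb{Z}^N}\int_{\mathbb{T}^N}e^{i\Phi(x,z,\xi,t)}s_a(x,\xi)p_\pm(z,\xi)u[z]\,d\xi,
\]
with $\Phi(x,z,\xi,t):=\varphi_a(x,\xi)-\varphi_a(z,\xi)-th_0(\xi)$; let $K_\pm(x,z,t)$ denote the resulting integral kernel. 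Two geometric observations drive what follows. On $\operatorname{supp} p_\pm$, $\gamma\in C_c^\infty(h_0(\mathbb{T}^N)\setminus\mathcal{T})$ forces $|v(\xi)|\geq a_0>0$, while $\chi(z)\rho_\pm(\cos(z,v(\xi)))$ forces $|z|\geq 1$ and $\pm\cos(z,v(\xi))\geq -1/4$; an elementary AM--GM computation then gives, for $\pm t\geq 0$,
\[
|z+tv(\xi)|^{2}\geq c\bigl(|z|^{2}+t^{2}\bigr).
\]
Moreover (\ref{phase estimate}) gives $\nabla_\xi\Phi=(x-z)-tv(\xi)+O(\langle x\rangle^{1-\varepsilon}+\langle z\rangle^{1-\varepsilon})$, so the main part of $\nabla_\xi\Phi$ is the classical vector $(x-z)-tv(\xi)$.

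The core step is the pointwise kernel estimate, which I would carry out by partitioning the $(x,z)$ regime. Where $|x|\leq\kappa(|z|+|t|)$ for $\kappa$ small, the lower bound on $|z+tv(\xi)|$ makes $|x-z-tv(\xi)|\gtrsim |z|+|t|$ dominate the remainder in $\nabla_\xi\Phi$, so iterated integration by parts with the adjoint of $L=(1+|\nabla_\xi\Phi|^{2})^{-1}(1-i\nabla_\xi\Phi\cdot\nabla_\xi)$ yields $|K_\pm(x,z,t)|\leq C_{N}(|z|+|t|)^{-N}$ for arbitrary $N$. In the complementary regime $|x|\gtrsim |z|+|t|$, in particular $|x|\gtrsim\langle t\rangle$, the refined estimate (\ref{s_a}) supplies $|s_a(x,\xi)|\leq C\langle x\rangle^{-1-\varepsilon}\lesssim\langle t\rangle^{-1-\varepsilon}$ inside the good cone $|\cos(x,v(\xi))|\geq 1/2$; inside the bad cone $|\cos(x,v(\xi))|\leq 1/2$, (\ref{s_a}) only gives $\langle x\rangle^{-\varepsilon}$, but the component of $x$ (hence of $\nabla_\xi\Phi$) orthogonal to $v(\xi)$ is uniformly comparable to $|x|$, so a directional integration by parts in the $\xi$-directions orthogonal to $v(\xi)$ restores the missing $\langle x\rangle^{-1}$. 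Inserting these bounds into the Schur estimate $\|G_\pm(t)\|^{2}\leq\bigl(\sup_{x}\sum_{z}|K_\pm|\bigr)\bigl(\sup_{z}\sum_{x}|K_\pm|\bigr)$ yields (\ref{G estimate}); taking $N$ large in the same bounds shows $K_\pm(\cdot,\cdot,t)\in\ell^{2}(\mathbb{Z}^{N}\times\mathbb{Z}^{N})$ for each fixed $t$, so $G_\pm(t)$ is Hilbert--Schmidt and therefore compact, and smoothness of $\Phi$ in $t$ together with these same bounds as $t$-uniform dominators delivers norm continuity by dominated convergence.

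Finally, since $\langle\tau\rangle^{-1-\varepsilon}$ is integrable for $\varepsilon>0$, the identity
\[
e^{itH}E_\pm(t)-P_\pm=E_\pm(0)-P_\pm+i\int_{0}^{t}e^{i\tau H}G_\pm(\tau)\,d\tau
\]
converges in operator norm as $t\to\pm\infty$; each $e^{i\tau H}G_\pm(\tau)$ is compact and the compacts are norm-closed, so the limit is compact, which together with the compactness of $E_\pm(0)-P_\pm$ from Lemma \ref{lem} ii) completes the proposition. The main obstacle will be the bad-cone sub-regime where $s_a$ only provides $\langle x\rangle^{-\varepsilon}$: the missing $\langle x\rangle^{-1}$ must be extracted by a directional non-stationary phase argument in the transverse $\xi$-directions, which is exactly where the Isozaki--Kitada-type construction of the phase $\varphi_a$ provided by Proposition \ref{mainp} is designed to pay off.
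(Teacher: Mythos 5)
Your setup---the oscillatory-integral representation of $G_\pm(t)$ with phase $\Phi=\varphi_a(x,\xi)-\varphi_a(z,\xi)-th_0(\xi)$, the lower bound $|z+tv(\xi)|\gtrsim |z|+|t|$ on $\operatorname{supp}p_\pm$, and the final integration of $\|G_\pm(\tau)\|$ in $\tau$---matches the paper, but the central step fails: you cannot get $\|G_\pm(t)\|\leq C\langle t\rangle^{-1-\varepsilon}$ from pointwise kernel bounds via the Schur test. In your regime $|x|\gtrsim |z|+|t|$ the phase genuinely has stationary points ($x\approx z+tv(\xi)$), so for fixed $x$ the kernel has size $\sim t^{-N/2}\langle x\rangle^{-1-\varepsilon}$ on the $\sim t^N$ lattice points $z$ near the image of the $\xi$-support under $x-tv(\cdot)$; hence $\sup_x\sum_z|K_\pm|\gtrsim t^{N/2-1-\varepsilon}$. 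The Schur test is blind to the oscillation in $z$ and loses a factor $t^{N/2}$, so it gives nothing useful for $N\geq 2$ and not even $t^{-1-\varepsilon}$ for $N=1$. The paper avoids this by never estimating the good-cone kernel pointwise: it integrates by parts only against the $y$- and $t$-dependent part of the phase, $\phi(t;y,\xi)=th_0(\xi)+\varphi_a(y,\xi)$, factors $F_+(t)=\sum_j S_j^l e^{-itH_0}B_j^l(t)$ with $\|\langle x\rangle^{1+\varepsilon-l}S_j^l\|<\infty$ and $\|B_j^l(t)\|\lesssim\langle at\rangle^{-l}$, uses unitarity of $e^{-itH_0}$, and interpolates in $l$. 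Some such $L^2$ factorization through the free unitary group is indispensable and is the main idea missing from your plan.

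Two further problems. In the bad-cone subregime your claim that the component of $\nabla_\xi\Phi$ orthogonal to $v(\xi)$ is comparable to $|x|$ is unjustified: $\nabla_\xi\Phi\approx (x-z)-tv(\xi)$ and $z_\perp$ can cancel $x_\perp$, since you only control $|z|\lesssim\kappa^{-1}|x|$. The correct observation, which the paper uses for $F_-(t)$, is that when $x$ lies outside the outgoing cone while $z$ lies inside it, the \emph{full} gradient satisfies $|\nabla_\xi\Phi|\gtrsim |x-z|+t|v(\xi)|$ on the whole support---there are no stationary points at all---so repeated integration by parts gives $\|F_-(t)\|\lesssim\langle at\rangle^{-k}$ for every $k$. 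Finally, the Hilbert--Schmidt route to compactness fails for $N\geq 3$: near the stationary set the kernel only decays like $\langle x\rangle^{-1-\varepsilon}$, and the double sum of its square diverges. The paper instead observes that $\langle x\rangle^{\varepsilon}(HJ_a-J_aH_0)$ is bounded by (\ref{s_a}), so $HJ_a-J_aH_0$ is compact (multiplication by $\langle x\rangle^{-\varepsilon}$ is compact on $\ell^2(\mathbb{Z}^N)$), and $G_\pm(t)$ is compact as this compact operator composed with bounded ones. Your final paragraph (integrability of $\langle\tau\rangle^{-1-\varepsilon}$ and norm-closedness of the compacts) is fine.
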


\begin{proof}
Let
\begin{align*}
\Phi(x,y,\xi;t):=\varphi_a(x,\xi)-th_0(\xi)-\varphi_a(y,\xi).
\end{align*}
Then the definition (\ref{E_pm}) of $E_\pm(t)$ implies
\begin{align*}
G_\pm(t)u[x]&=(HJ_a-J_aH_0)e^{-itH_0}\tilde P_\pm u[x] \\
&=(2\pi)^{-d} \int_{\mathbb{T}^d} \sum_{y\in\mathbb{Z}^d} e^{i\Phi(x,y,\xi;t)} s_a(x,\xi) p_\pm(y,\xi) u[y] d\xi.
\end{align*}
The norm continuity of $G_\pm(t)$ is obvious.
Furthermore, (\ref{s_a}) implies the compactness of $HJ_a-J_aH_0$ by the similar argument in the proof of Lemma \ref{lem2}, hence $G_\pm(t)$ is compact.

Let us prove (\ref{G estimate}). We consider the $+$ case only. The other case is proved similarly.
We use another decomposition $\rho^\pm \in C^\infty([-1,1]; [0,1])$ which is different from $\rho_\pm$ in that
\begin{align*}
\rho^+(\sigma)+\rho^-(\sigma)=1, \\
\rho^+(\sigma) =
\begin{cases}
1, &\sigma\geq \frac{3}{4},\\
0, &\sigma\leq \frac{1}{2}.
\end{cases}
\end{align*}
We define
\begin{align*}
s_-(x,\xi)&:=s_a(x,\xi)\chi_{\{x\neq0\}}\rho^-(\cos(x,v(\xi))), \\
s_+(x,\xi)&:=s_a(x,\xi)-s_-(x,\xi).
\end{align*}
We then decompose $G_+$ as
\begin{align} \label{G decomposition}
G_+(t)u[x]&=(2\pi)^{-d} \int_{\mathbb{T}^d} \sum_{y\in\mathbb{Z}^d} e^{i\Phi(x,y,\xi;t)} (s_+ p_+ + s_- p_+)(x,y,\xi)  u[y] d\xi \\
&=:(F_+(t)+F_-(t))u[x]. \nonumber
\end{align}
Now we claim that for any $t \geq 0$ and $\ell \geq 0$,
\begin{align}
&\| F_+(t) \| \leq C \langle at \rangle^{-1-\varepsilon}, \label{F_+} \\
&\|F_-(t)\|\leq C_\ell \langle at \rangle^{-\ell} . \label{F_-}
\end{align}
If (\ref{F_+}) and (\ref{F_-}) hold, then (\ref{G estimate}) follows from (\ref{G decomposition}).

For the proof of (\ref{F_+}), we let
\begin{align*}
\phi(t;y,\xi):=th_0(\xi)+\varphi_a(y,\xi)
\end{align*}
and set
\begin{align*}
L_1:=\langle \nabla_\xi \phi\rangle^{-2}(1-\nabla_\xi \phi\cdot D_\xi) .
\end{align*}
Then (\ref{phase estimate}) implies on the support of $s_+(x,\xi)p_+(y,\xi)$,
\begin{align*}
\langle \nabla_\xi \phi \rangle^{-1}\leq C \langle |y|+t|v(\xi)|\rangle^{-1} .
\end{align*}
Thus, for any $\ell \in \mathbb{Z}_+$, we have
\begin{align*}
F_+(t)u[x]
&=(2\pi)^{-d} \int_{\mathbb{T}^d} \sum_{y\in\mathbb{Z}^d} L_1^\ell (e^{-i\phi(t;y,\xi)})e^{i\varphi_a(x,\xi)} s_+(x,\xi)p_+(y,\xi) u[y] d\xi \nonumber \\
&=(2\pi)^{-d} \int_{\mathbb{T}^d} \sum_{y\in\mathbb{Z}^d} e^{-i\phi(t;y,\xi)}({}^tL_1)^\ell (e^{i\varphi_a(x,\xi)} s_+(x,\xi)p_+(y,\xi)) u[y] d\xi \nonumber \\
&=(2\pi)^{-d} \int_{\mathbb{T}^d} \sum_{y\in\mathbb{Z}^d} e^{i\Phi(t;y,\xi)}\{e^{-i\varphi_a(x,\xi)}({}^tL_1)^\ell (e^{i\varphi_a(x,\xi)} s_+ p_+)\} u[y] d\xi . \nonumber
\end{align*}
The function in $\{\}$ is a finite sum of functions of the form $s_j^\ell(x,\xi) p_j^\ell(y,\xi;t)$ such that
\begin{align} \label{decomposition}
\begin{cases}
|\partial_\xi^\beta s_j^\ell(x,\xi)|\leq C_\beta \langle x \rangle^{\ell-1-\varepsilon} , \\
|\partial_\xi^\beta p_j^\ell(y,\xi;t)|\leq C_\beta \langle |y|+t|v(\xi)| \rangle^{-\ell}.
\end{cases}
\end{align}
Indeed, (\ref{decomposition}) follows from (\ref{s_a}) and (\ref{p+-}).
Letting
\begin{align*}
S_j^\ell u[x]&:=(2\pi)^{-d} \int_{\mathbb{T}^d} \sum_{y\in\mathbb{Z}^d} e^{i(\varphi_a(x,\xi)-y\cdot\xi)} s_j^\ell(x,\xi) u[y] d\xi , \\
P_j^\ell(t) u[x]&:=(2\pi)^{-d} \int_{\mathbb{T}^d} \sum_{y\in\mathbb{Z}^d} e^{i(x\cdot\xi-\varphi_a(y,\xi))} p_j^\ell(y,\xi;t)  u[y] d\xi ,
\end{align*}
we have
\begin{align*}
F_+(t)=\sum_j S_j^\ell e^{-itH_0} P_j^\ell(t).
\end{align*}
Furthermore, we have by (\ref{decomposition}) and the argument in the proof of Lemma \ref{lem}
\begin{align*}
&\| \langle x \rangle^{1+\varepsilon-\ell} S_j^\ell \| < \infty , \\
&\|P_j^\ell(t)\| \leq C_\ell \langle at \rangle^{-\ell} .
\end{align*}
Thus we obtain
\begin{align*}
\|\langle x \rangle^{1+\varepsilon-\ell} F_+(t)\| \leq C^\prime_\ell \langle at \rangle^{-\ell}
\end{align*}
for any $\ell \in \mathbb{Z}_{+}$. Interpolation with respect to $\ell$ implies (\ref{F_+}).

For the proof of (\ref{F_-}), we note on the support of $s_-(x,\xi)p_+(y,\xi)$,
\begin{align*}
\langle \nabla_\xi \Phi \rangle^{-1} \leq C \langle  |x-y|+t |v(\xi)|\rangle^{-1} .
\end{align*}
Letting
\begin{align*}
L_2:=\langle \nabla_\xi \Phi \rangle^{-2} (1+\nabla_\xi \Phi\cdot D_\xi),
\end{align*}
we have
\begin{align*}
F_-(t)u[x]&=(2\pi)^{-d} \int_{\mathbb{T}^d} \sum_{y\in\mathbb{Z}^d} e^{i\Phi(x,y,\xi;t)}({}^tL_2)^\ell (s_-(x,\xi)p_+(y,\xi)) u[y] d\xi \\
&=(2\pi)^{-d} \int_{\mathbb{T}^d} \sum_{y\in\mathbb{Z}^d} e^{i(\varphi_a(x,\xi)-\varphi_a(y,\xi))}e^{-it h_0(\xi)}({}^tL_2)^\ell (s_- p_+) u[y] d\xi 
\end{align*}
for any $\ell \in \mathbb{Z}_+$. Since
\begin{align*}
q^\ell(x,y,\xi;t):=e^{-it h_0(\xi)}({}^tL_2)^\ell (s_-(x,\xi)p_+(y,\xi))
\end{align*}
satisfies
\begin{align*}
|\partial_\xi^\beta q^\ell(x,y,\xi;t)|\leq C_{\ell,\beta} \langle tv(\xi) \rangle^{|\beta|-\ell}
\end{align*}
for any $\ell \in \mathbb{Z}_+$, we obtain (\ref{F_-}) by the argument in the proof of Lemma \ref{lem}.
\end{proof}

The next proposition claims that any particle in the energy $\Gamma$ does not stay in any bounded domain in $x$.

\begin{prop}
For any $R>0$ and $\ell\geq0$,
\begin{align}\label{teisoku}
\|\chi_{\{|x|< R\}}E_\pm (s)\|\leq C_{\ell,R} \langle s\rangle^{-\ell}, \quad \pm s\geq 0 .
\end{align}

\end{prop}

\begin{proof}
We prove (\ref{teisoku}) for the $+$ case only. We first note
\begin{align*}
E_+ (s)u[x]=(2\pi)^{-d} \int_{\mathbb{T}^d} \sum_{y\in\mathbb{Z}^d} e^{i\Phi(x,y,\xi;s)} p_+ (y,\xi) u[y] d\xi,
\end{align*}
where $\Phi(x,y,\xi;t)=\varphi_a(x,\xi)-th_0(\xi)-\varphi_a(y,\xi)$. We observe that on the support of $p_+(y,\xi)$,
\begin{align*}
|s v(\xi)+\nabla_\xi\varphi_a(y,\xi)|\geq c(|y|+s |v(\xi)|)
\end{align*}
for large $s$. Then, if $|x|\leq R$, we have for $s>0$ large enough
\begin{align*}
|\nabla_\xi \Phi(x,y,\xi;s)|\geq c(|y|+s |v(\xi)|), \quad (y,\xi)\in \operatorname{supp}p_+.
\end{align*}
Similarly to the proof of (\ref{F_-}), we obtain (\ref{teisoku}).
\end{proof}

\section{Proof of Theorem \ref{main}}

\subsection{Existence of modified wave operators}
We prove the existence of the limit (\ref{mWO}) for the $+$ case only. The other case is proved similarly. First we fix $\Gamma\Subset h_0(\mathbb{T}^d)\backslash\mathcal{T}$. 
We remark that, for any $u\in \mathcal{H}$ such that $Fu\in C^\infty(\mathbb{T}^d)$ and $\operatorname{supp}Fu\subset h_0^{-1}(\Gamma)$, we have
\begin{align} \label{JJa}
JE_{H_0}(\Gamma)u=J_a u
\end{align}
for some small enough $a>0$.
Then, to prove the existence of the limit (\ref{mWO}), it suffices to show that
\begin{align}\label{pf existence}
&\int_0^\infty \left\|\frac{d}{dt} \left(e^{itH}Je^{-itH_0}E_{H_0}(\Gamma)u \right)\right\| dt \\
&=\int_0^\infty \left\|\frac{d}{dt} \left(e^{itH}J_ae^{-itH_0}u \right)\right\| dt \nonumber \\
&= \int_0^\infty \|e^{itH}(HJ_a-J_aH_0)e^{-itH_0}u\| dt \nonumber \\
&= \int_0^\infty \|(HJ_a-J_aH_0)e^{-itH_0}u\| dt \nonumber
\end{align}
is finite for such $u$. The last equality follows from the fact that $e^{itH}$ is a unitary operator. Furthermore, by Assumption \ref{ass1} and a partition of unity on $\mathbb{T}^d$, we may assume that $Fu\in C^\infty(\mathbb{T}^d)$ has a sufficiently small support in $\{\xi\in h_0^{-1}(\Gamma) \mid \det A(\xi)\neq0\}$.

Let $w(t):= (HJ_a-J_aH_0)e^{-itH_0}u$. Then (\ref{HJ_a-J_aH_0}) implies
\begin{align*}
w(t)[x]=(2\pi)^{-\frac{d}{2}}\int_{\mathbb{T}^d} e^{i(\varphi_a(x,\xi)-th_0(\xi))} s_a(x,\xi) Fu(\xi) d\xi . \nonumber
\end{align*}
Now we use the stationary phase method. The stationary point $\xi=\xi(x,t)$ is determined by
\begin{align}
\frac{1}{t}\nabla_\xi \varphi_a(x,\xi)-v(\xi)=0. \label{stationary point}
\end{align}
We define
\begin{align*}
D_t:= \{x\in\mathbb{Z}^d \mid \exists \xi\in\operatorname{supp}Fu \text{ s.t.}\ (\ref{stationary point}) \text{ holds}\}.
\end{align*}
By (\ref{phase estimate}), there exists an open set $U\Subset \{\xi\in h_0^{-1}(\Gamma) \mid \det A(\xi)\neq0\}$ such that $\operatorname{supp}Fu \Subset U$ and that for $t>0$ large enough,
\begin{align*}
D_t \subset \left\{x \mid \frac{x}{t}\in v(U)\right\}=:D_t^\prime .
\end{align*}

On $(D_t^\prime)^c$, the non stationary phase method implies
\begin{align*}
|w(t)[x]|\leq C_\ell \langle |x|+t \rangle^{-\ell},\quad x\in\mathbb{Z}^d,\ t>0
\end{align*}
for any $\ell\geq0$. Thus we learn for any $\ell \geq 0$
\begin{align}\label{nonstationary}
\|\chi_{(D_t^\prime)^c}w(t)\| \leq C_\ell^\prime t^{-\ell}.
\end{align}

%

On $D_t^\prime$, the stationary phase method implies
\begin{align*}
w(t)[x]=t^{-\frac{d}{2}} A(t,x)s_a(x,\xi(x,t))Fu(\xi(x,t))+t^{-\frac{d}{2}-1}r(t,x),
\end{align*}
where $A(t,x)$ is uniformly bounded in $x$ and $t$ with $x\in D_t^\prime$, and
\begin{align*}
|r(t,x)|\leq C\sup_{|\beta|\leq d+3} \sup_{\xi\in\operatorname{supp}Fu} |\partial_\xi^\beta s_a(x,\xi)| .
\end{align*}
Since $\cos(x,v(\xi))\geq\frac{1}{2}$ for $x\in D_t^\prime$ and $\xi \in \operatorname{supp}Fu$ if $t$ is sufficiently large, we have by (\ref{s_a})
\begin{align*}
|s_a(x,\xi(x,t))|&\leq C\langle x\rangle^{-1-\varepsilon}, \\
|r(t,x)|&\leq C\langle x\rangle^{-1-\varepsilon} .
\end{align*}
We note $|x|\sim t$ on $D_t^\prime$ and the Lebesgue measure of $D_t^\prime$ is bounded by $C t^d$.
Thus we learn
\begin{align}\label{stationary}
\|\chi_{D_t^\prime}w(t)\| \leq \left(\int_{D_t^\prime} \left(C t^{-\frac{d}{2}} \langle x \rangle^{-1-\varepsilon}\right)^2 dx \right)^{\frac{1}{2}}
\leq C^\prime t^{-1-\varepsilon} .
\end{align}
Hence (\ref{nonstationary}) and (\ref{stationary}) imply
\begin{align*}
\|w(t)\|&\leq \|\chi_{D_t^\prime}w(t)\| + \|\chi_{(D_t^\prime)^c}w(t)\| \leq C^{\prime\prime} t^{-1-\varepsilon},
\end{align*}
which proves (\ref{pf existence}) is finite.
\hfill$\Box$

\subsection{Proof of the properties i), ii) and iii)}\label{subsec3.2}
\begin{proof}[Proof of i)]
The intertwining property is proved similarly to the short-range case (see, e.g., \cite{R-S}).
\end{proof}

\begin{proof}[Proof of ii)]
It suffices to show $\|W_J^\pm(\Gamma) u\|=\|u\|$ for $Fu\in C^\infty(\mathbb{T}^d)$ with $\operatorname{supp} Fu\subset h_0^{-1}(\Gamma)$. For such $u$, $Ju=J_a u$ holds for small $a>0$. Thus letting $u_t=e^{-itH_0}u$, we learn
\begin{align*}
\|W_J^\pm(\Gamma) u\|^2=\lim_{t\to\pm\infty} \|J_a u_t\|^2=\lim_{t\to\pm\infty} ((J_a^* J_a-I)u_t,u_t)+\|u\|^2 .
\end{align*}
Using $\operatorname{w-lim}_{t\to\pm\infty} u_t=0$ and Lemma \ref{lem2}, we have $\lim_{t\to\pm\infty} (J_a^* J_a-I)u_t=0$. This proves $W_J^\pm(\Gamma)$ are partial isometries.
\end{proof}

\begin{proof}[Proof of iii)]
We prove the asymptotic completeness of $W_J^+(\Gamma)$ only. Since intertwining property implies $\operatorname{Ran} W_J^+(\Gamma) \subset E_{H}(\Gamma)\mathcal{H}_{\operatorname{ac}}(H)$, it suffices to prove $\operatorname{Ran} W_J^+(\Gamma) \supset E_{H}(\Gamma)\mathcal{H}_{\operatorname{ac}}(H)$. 

We fix $v\in\mathcal{H}_{\operatorname{ac}}(H)$ and $\gamma\in C^\infty(\mathbb{R})$ so that $\gamma(H)v=v$ and $\operatorname{supp}\gamma\subset \Gamma$.
We set $v_t:=e^{-itH}v$ for simplicity.
Then we show that $E_{H}(\Gamma)\mathcal{H}_{\operatorname{ac}}(H) \subset \operatorname{Ran}W_J^+(\Gamma)$ follows from
\begin{align}\label{enss}
\lim_{s\to\infty}\limsup_{t\to\infty} \| v_s-e^{i(t-s)H}E_+(t-s)v_s \| =0 .
\end{align}
First, we observe
\begin{align*}
&\|e^{itH_0}J_a^* e^{-itH}v-e^{is H_0}\tilde P_+ v_s\| \\
&\leq \|e^{itH_0}J_a^*\left[v_t-E_+(t-s)v_s\right]\| + \| e^{itH_0} J_a^* E_+(t-s)v_s-e^{isH_0} \tilde P_+ v_s \| .
\end{align*}
Lemma \ref{enssesti} implies the second term tends to $0$ as $t\to\infty$. The first term is estimated by (\ref{enss}) since
\begin{align*}
&\|e^{itH_0}J_a^*\left[v_t-E_+(t-s)v_s\right]\| \\
&\leq \|e^{itH_0}J_a^*\| \|v_t-E_+(t-s)v_s\| \\
&=\|J_a^*\| \|e^{i(t-s)H}(v_t-E_+(t-s)v_s)\| \\
&=\|J_a^*\| \|v_s-e^{i(t-s)H}E_+(t-s)v_s\| .
\end{align*}
Thus we have
\begin{align*}
\lim_{s\to\infty}\limsup_{t\to\infty} \|e^{itH_0}J_a^* e^{-itH}v-e^{is H_0}\tilde P_+ v_s\|=0.
\end{align*}
This implies $\left\{ e^{itH_0}J_a^* e^{-itH}v \right\}_{t\geq0}$ is a Cauchy sequence in $\mathcal{H}$, equivalently, there exists the limit
\begin{align*}
\lim_{t\to\infty} e^{itH_0}J_a^* e^{-itH}v=: \Omega^a v.
\end{align*}
Hence we obtain for sufficiently small $a>0$,
\begin{align*}
v=W_{J}^+(\Gamma) \Omega^a v\in \operatorname{Ran} W_J^+(\Gamma) .
\end{align*}

In the rest of the proof, we show (\ref{enss}). Since $v_s=\gamma(H)v_s$, we have
\begin{align}
v_s-e^{i(t-s)H}E_+(t-s)v_s=&\gamma(H)v_s-e^{i(t-s)H}E_+(t-s)v_s \label{pf completeness} \\
=&(\gamma(H)-\gamma(H_0))v_s \nonumber \\
&+(\gamma(H_0)-P_+-P_-)v_s \nonumber \\
&+(P_+-e^{i(t-s)H}E_+(t-s))v_s+P_-v_s . \nonumber
\end{align}
We note $\operatorname{w-lim}_{s\to\infty}v_s=0$ and $\gamma(H)-\gamma(H_0)$ is compact by the compactness of $H-H_0= V$. We also note $\gamma(H_0)-P_+-P_-$ is compact by Lemma \ref{lem2}, and $P_+ -e^{i(t-s)H}E_+(t-s)$ converges to a compact operator independent of $s$ as $t\to\infty$ by Proposition \ref{enss prop}.
Thus the terms on the RHS of (\ref{pf completeness}) except the last one converge to 0.

To estimate the last term of (\ref{pf completeness}), we observe
\begin{align}\label{pf completeness2}
\|P_-v_s\|^2=&(P_-^*P_-v_s, v_s) \\
=&((P_-^*-P_-)P_-v_s, v_s) \nonumber \\
&+((P_--e^{-isH}E_-(-s))P_-v_s, v_s) \nonumber \\
&+(P_-v_s,E_-(-s)^*v) . \nonumber
\end{align}
By the similar argument as above, we learn the first and second terms of (\ref{pf completeness2}) converge to 0 as $s\to\infty$. The third term of (\ref{pf completeness2}) is bounded by
\begin{align} \label{R de bunkai}
&|(P_-v_s,E_-(-s)^*v)| \\
&=|(P_-v_s,E_-(-s)^*(\chi_{\{|x|\geq R\}}+\chi_{\{|x|<R\}})v)| \nonumber \\
&\leq \|E_-(-s) P_- v_s\| \|\chi_{\{|x|\geq R\}}v\| + \|P_- v_s\| \|\chi_{\{|x|< R\}}E_-(-s)\| \| v\| \nonumber \\
&\leq C_v (\|\chi_{\{|x|\geq R\}}v\| + \|\chi_{\{|x|< R\}}E_-(-s)\|) \nonumber
\end{align}
for any $R>0$. 
Using (\ref{teisoku}) and $\lim_{R\to\infty}\|\chi_{\{|x|\geq R\}}v\|=0$, we learn that (\ref{R de bunkai}) converges to $0$ as $s\to\infty$. Hence we obtain (\ref{enss}).
\end{proof}

\appendix
\section{Classical mechanics and the construction of phase function}
\label{appendix A}

In this appendix, we use the following notations: 
For $\rho\in (0,1)$, we define
\begin{align*}
h(x,\xi)=&h_0(\xi)+V(x), \\
V_\rho(t,x)=&V(x)\chi(\rho x)\chi\left(\frac{\langle\log\langle t \rangle\rangle x}{\langle t \rangle}\right), \\
h_\rho(t,x,\xi)=&h_0(\xi)+V_\rho(t,x), \\
\nabla_x^2 V_\rho(t,x)=&{}^t\nabla_x\nabla_x V_\rho(t,x),
\end{align*}
where $\chi\in C^\infty(\mathbb{R}^d)$ is a fixed function satisfying (\ref{chi}).
Let $\varepsilon$ be as in Assumption \ref{ass2}. We fix $\varepsilon_0,\varepsilon_1 > 0$ such that $\varepsilon_0+\varepsilon_1<\varepsilon$.

The construction of time-decaying potential is same as Isozaki and Kitada \cite{IsoKita}, and is first used by Kitada and Yajima \cite{K-Y}.
One of the merits of this construction is that $V_\rho$ decays with respect to time $t$ almost same as position $x$.
The next lemma follows from Assumption \ref{ass2} with elementary computations.

\begin{lem}\label{V time-decay}
For any $t\in\mathbb{R}$, $x\in\mathbb{R}^d$ and multi-index $\alpha$,
\begin{align} \label{Vrho}
|\partial_x^\alpha V_\rho(t,x)| \leq C_\alpha \min\{\rho^{\varepsilon_0}\langle t\rangle^{-|\alpha|-\varepsilon_1}, \langle x\rangle^{-|\alpha|-\varepsilon}\} ,
\end{align}
where $C_\alpha$'s are independent of $x$, $t$ and $\rho$.
\end{lem}

Let $(q,p)(t,s)=(q,p)(t,s;x,\xi)$ be the solution to the canonical equation associated to the Hamiltonian $h_\rho$:
\begin{align*}
\left\{
\begin{array}{l}
\partial_t q(t,s)=\nabla_{\xi} h_\rho(t,p(t,s),q(t,s)), \\
\partial_t p(t,s)=-\nabla_x h_\rho(t,p(t,s),q(t,s)), \\
(q,p)(s,s)=(x,\xi).
\end{array}
\right. 
\end{align*}
This can be rewritten in the integral form:
\begin{align}
q(t,s)
&=x+\int_s^t v(p(\tau,s)) d\tau , \label{canq}\\
p(t,s)
&=\xi-\int_s^t \nabla_x V_\rho(\tau,q(\tau,s)) d\tau . \label{canp}
\end{align}

Before proving Proposition \ref{mainp}, let us describe the outline of this section.
First, we see in Proposition \ref{general estimates} that $q(t,s) \sim x + (t-s)v(\xi)$ and $p(t,s) \sim \xi$ for sufficiently small $\rho>0$.
Then we construct a solution $\phi(t;x,\xi)$ of the Hamilton-Jacobi equation (\ref{Hamilton-Jacobi}) by the method of characteristics.
Also estimates for $y(s,t;x,\xi)$ and $\eta(t,s;x,\xi)$, characterized by (\ref{y}) and (\ref{eta}), respectively, are given in Proposition \ref{y eta estimates}.
Using the above $\phi$, we define functions $\phi_\pm(x,\xi)$ by (\ref{phi definition}), and we confirm that $\phi_\pm$ satisfies the eikonal equation (\ref{eikonal0}) and 
the estimate (\ref{phase estimate}) in outgoing and incoming region, respectively.
Finally, we construct a function $\varphi(x,\xi)$ such that Proposition \ref{mainp} holds with $\phi_\pm$ and phase-space cutoffs.

Now, we start with estimates for classical orbits $(q,p)(t,s;x,\xi)$.
The following proposition is the corresponding result of Proposition 2.1 in \cite{IsoKita}.

\begin{prop}\label{general estimates}
For $\rho>0$ small enough, there exist $C_\ell>0\ (\ell\in \mathbb{Z}_{+})$ such that, for any $x,\xi\in\mathbb{R}^d$, $0\leq\pm s \leq\pm t $ and multi-indices $\alpha$ and $\beta$,
\begin{align}
&|p(s,t;x,\xi)-\xi|\leq C_0\rho^{\varepsilon_0}\langle s\rangle^{-\varepsilon_1}, \label{p1}\\
&|p(t,s;x,\xi)-\xi|\leq C_0\rho^{\varepsilon_0}\langle s\rangle^{-\varepsilon_1}, \label{p2}\\
&|\partial_x^\alpha\left[\nabla_x q(s,t;x,\xi)-I \right]|\leq C_{|\alpha|}\rho^{\varepsilon_0}\langle s\rangle^{-\varepsilon_1}, \label{qs}\\ 
&|\partial_x^\alpha\nabla_x p(s,t;x,\xi)|\leq C_{|\alpha|}\rho^{\varepsilon_0}\langle s\rangle^{-1-\varepsilon_1}, \label{ps}\\
&|\partial_x^\alpha\partial_\xi^\beta\left[\nabla_x q(t,s;x,\xi)-I \right]|\leq C_{|\alpha|+|\beta|}\rho^{\varepsilon_0}\langle s\rangle^{-1-\varepsilon_1}|t-s|,\label{qx} \\ 
&|\partial_x^\alpha\partial_\xi^\beta\nabla_x p(t,s;x,\xi)|\leq C_{|\alpha|+|\beta|}\rho^{\varepsilon_0}\langle s\rangle^{-1-\varepsilon_1}, \label{px}\\
&|\partial_\xi^\beta\left[\nabla_\xi q(t,s;x,\xi)-(t-s)A(\xi) \right]|\leq C_{|\beta|}\rho^{\varepsilon_0}\langle s\rangle^{-\varepsilon_1}|t-s|,\label{qxi} \\ 
&|\partial_\xi^\beta\left[\nabla_\xi p(t,s;x,\xi)-I\right]|\leq C_{|\beta|}\rho^{\varepsilon_0}\langle s\rangle^{-\varepsilon_1},\label{pxi} \\
&|\partial_x^\alpha\partial_\xi^\beta\left[q(t,s;x,\xi)-x-(t-s)v(p(t,s;x,\xi))\right]| \label{qesti} \\
& \hspace{2cm}\leq C_{|\alpha|+|\beta|}\rho^{\varepsilon_0} \min\{|t-s|\langle s\rangle^{-\varepsilon_1},\langle t\rangle^{1-\varepsilon_1}\}. \nonumber
\end{align}
Here, $|x|=\left(\sum_{j=1}^{d}|x_j|^2\right)^{\frac{1}{2}}$ for a vector $x$ and $|M|=\left(\sum_{j,k=1}^d |M_{jk}|^2\right)^\frac{1}{2}$ for a matrix $M$.
\end{prop}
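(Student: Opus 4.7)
The plan is to set up a bootstrap/Gronwall analysis directly on the integral equations (\ref{canq}) and (\ref{canp}), using the decay bound (\ref{Vrho}) on $V_\rho$ and its derivatives as the basic input, and then to induct on $|\alpha|+|\beta|$ for the derivative estimates. Smallness of $\rho$ will be used throughout to absorb any constants appearing in the contraction.

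First I would establish (\ref{p1}) and (\ref{p2}) by direct integration in (\ref{canp}): substituting $|\nabla_x V_\rho(\tau,\cdot)| \leq C\rho^{\varepsilon_0}\langle \tau\rangle^{-1-\varepsilon_1}$ gives $|p(t,s)-\xi| \leq C\rho^{\varepsilon_0}\int_{\min(s,t)}^{\max(s,t)}\langle\tau\rangle^{-1-\varepsilon_1}\,d\tau \leq C_0\rho^{\varepsilon_0}\langle s\rangle^{-\varepsilon_1}$, regardless of the sign of $t-s$. Next, for the $x$-derivative estimates (\ref{qs})--(\ref{px}), differentiate (\ref{canq}) and (\ref{canp}) in $x$ to produce the linear Volterra system
\begin{align*}
\nabla_x q(t,s) &= I + \int_s^t A(p(\tau,s))\nabla_x p(\tau,s)\,d\tau, \\
\nabla_x p(t,s) &= -\int_s^t \nabla_x^2 V_\rho(\tau, q(\tau,s))\nabla_x q(\tau,s)\,d\tau.
\end{align*}
The integrable kernel $|\nabla_x^2 V_\rho| \leq C\rho^{\varepsilon_0}\langle\tau\rangle^{-2-\varepsilon_1}$ makes a Picard iteration converge for $\rho$ small, yielding (\ref{qs}) and (\ref{ps}); the refinement (\ref{qx}) over (\ref{qs}) follows by substituting (\ref{ps}) back into the $\nabla_x q$ identity and obtaining the extra $\langle s\rangle^{-1}$ factor from the integral.

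For the $\xi$-derivative estimates (\ref{qxi}) and (\ref{pxi}) I would differentiate in $\xi$, now with initial conditions $\nabla_\xi q(s,s)=0$ and $\nabla_\xi p(s,s)=I$, obtaining an analogous Volterra system; the same contraction gives $|\nabla_\xi p - I| \leq C\rho^{\varepsilon_0}\langle s\rangle^{-\varepsilon_1}$. Inserting this together with (\ref{p2}) into the identity $\nabla_\xi q(t,s) = \int_s^t A(p(\tau,s))\nabla_\xi p(\tau,s)\,d\tau$ and replacing $A(p(\tau,s))$ by $A(\xi)$ up to a correction $O(\rho^{\varepsilon_0}\langle s\rangle^{-\varepsilon_1})$ produces (\ref{qxi}). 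For (\ref{qesti}), I would rewrite
\begin{align*}
q(t,s)-x-(t-s)v(p(t,s)) &= \int_s^t\!\bigl[v(p(\tau,s))-v(p(t,s))\bigr]\,d\tau \\
&= -\int_s^t (\sigma-s)\,A(p(\sigma,s))\,\dot p(\sigma,s)\,d\sigma
\end{align*}
by Fubini, and then insert $|\dot p(\sigma,s)| \leq C\rho^{\varepsilon_0}\langle\sigma\rangle^{-1-\varepsilon_1}$. The two bounds in the $\min$ come from respectively pulling out $(\sigma-s)\leq|t-s|$ to get $|t-s|\langle s\rangle^{-\varepsilon_1}$, and absorbing $(\sigma-s)\langle\sigma\rangle^{-1}\leq 1$ and integrating $\langle\sigma\rangle^{-\varepsilon_1}$ over $[s,t]$ to get $\langle t\rangle^{1-\varepsilon_1}$.

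The main obstacle will be the bookkeeping for higher orders $|\alpha|+|\beta|\geq 2$: each additional derivative of $v(p(\tau,s))$ or $\nabla_x V_\rho(\tau, q(\tau,s))$ triggers a Faà di Bruno expansion in which several trajectory derivatives of varying orders appear as factors. I would handle this by induction on $|\alpha|+|\beta|$: at each level the top-order derivative still satisfies a Volterra equation of the same shape as for $|\alpha|+|\beta|=1$, so the same contraction argument applies with an inhomogeneous term built from strictly lower-order factors that are controlled by the inductive hypothesis. Smallness of $\rho$ continues to absorb the constants accumulated during the induction, and all estimates retain the explicit $\rho^{\varepsilon_0}$ prefactor.
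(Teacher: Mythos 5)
Your proposal is correct and follows essentially the same route as the paper: zeroth-order bounds by direct integration of the canonical equations against (\ref{Vrho}), first- and higher-order derivative bounds via the linear Volterra systems solved by a Neumann series (contraction) with the smallness of $\rho$ absorbing the constants, and an induction on $|\alpha|+|\beta|$ in which the top-order derivative satisfies the same-shape equation with lower-order inhomogeneous terms. The only deviation is cosmetic: for (\ref{qesti}) you rewrite the remainder by Fubini as $-\int_s^t(\sigma-s)A(p(\sigma,s))\dot p(\sigma,s)\,d\sigma$, whereas the paper expresses $p(\tau,s)-p(t,s)$ as an explicit integral of $\nabla_x V_\rho$; both yield the same $\min$ bound.
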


\begin{proof}
We prove in the $0\leq s \leq t$ case. The other case is proved similarly.
The proof is decomposed into 5 steps.

\underline{Step 1}: Proof of (\ref{p1}) and (\ref{p2}). The inequalities (\ref{p1}) and (\ref{p2}) are shown by (\ref{Vrho}) and
\begin{align*}
p(t,t^\prime)-\xi=-\int_{t^\prime}^t \nabla_x V_\rho(\tau,q(\tau,t^\prime)) d\tau,
\quad t,t^\prime\in\mathbb{R}.
\end{align*}

\underline{Step 2}: Proof of (\ref{qs}) and (\ref{ps}). We use the induction with respect to $|\alpha|$.
First we prove (\ref{qs}) and (\ref{ps}) for $\alpha=0$.
Differentiating (\ref{canq}) and (\ref{canp}) in $x$, we have
\begin{align*}
\left\{
\begin{array}{l}
\nabla_x q(s,t)=I+\int_t^s A(p(\tau,t))\nabla_x p(\tau,t) d\tau, \\
\nabla_x p(s,t)=-\int_t^s \nabla_x^2 V_\rho(\tau,q(\tau,t))\nabla_x q(\tau,t) d\tau.
\end{array}
\right.
\end{align*}
Letting
\begin{align*}
Q_0(s)&:=\nabla_x q(s,t)-I, \\
P_0(s)&:=\nabla_x p(s,t),
\end{align*}
we observe
\begin{align}\label{eq xs}
\left\{
\begin{array}{l}
Q_0(s)=\int_t^s A(p(\tau,t))P_0(\tau) d\tau, \\
P_0(s)=-\int_t^s \nabla_x^2 V_\rho(\tau,q(\tau,t))Q_0(\tau) d\tau -\int_t^s \nabla_x^2 V_\rho(\tau,q(\tau,t)) d\tau.
\end{array}
\right.
\end{align}
Thus combining the two equations in (\ref{eq xs}), we learn
\begin{align*}
P_0(s)=B_t(P_0(\cdot))(s)+R_0(s),
\end{align*}
where
\begin{align*}
&B_t(P(\cdot))(s):=-\int_t^s \nabla_x^2 V_\rho(\tau,q(\tau,t))\left[\int_t^\tau A(p(\sigma,t))P(\sigma) d\sigma\right] d\tau, \\
&R_0(s):=-\int_t^s \nabla_x^2 V_\rho(\tau,q(\tau,t))d\tau.
\end{align*}
Let $\|M(\cdot)\|_0 :=\sup_{0\leq s\leq t}\langle s\rangle^{1+\varepsilon_1}|M(s)|$ for $M\in C(\left[0,t\right]; M_d(\mathbb{R}))$. Then (\ref{Vrho}) implies
\begin{align*}
|B_t(P(\cdot))(s)|&\leq \int_s^t C_2\rho^{\varepsilon_0}\langle \tau\rangle^{-2-\varepsilon_1}\int_\tau^t |P(\sigma)| d\sigma d\tau \\
&\leq C_2\rho^{\varepsilon_0}\|P\|_0 \int_s^\infty \langle \tau\rangle^{-2-\varepsilon_1}\int_\tau^\infty \langle \sigma \rangle^{-1-\varepsilon_1} d\sigma d\tau \\
&\leq C_2C^\prime\rho^{\varepsilon_0}\langle s \rangle^{-1-2\varepsilon_1}\|P\|_0, \\
|R_0(s)|&\leq \int_s^t C_2\rho^{\varepsilon_0}\langle \tau\rangle^{-2-\varepsilon_1} d\tau \leq C\rho^{\varepsilon_0}\langle s\rangle^{-1-\varepsilon_1}.
\end{align*}
If $\rho\leq (2C_2C^\prime)^{-\frac{1}{\varepsilon_0}}$, the operator norm $\| B_t \|_0$ of $B_t$ with respect to $\|\cdot\|_0$ is bounded by $\frac{1}{2}$. Hence we obtain
\begin{align}\label{px for 0}
\|P_0(\cdot)\|_0=\|(1-{B_t})^{-1}(R_0(\cdot))\|_0\leq \frac{1}{1-\|B_t\|_0}\|R_0(\cdot)\|_0 \leq 2C\rho^{\varepsilon_0},
\end{align}
which proves (\ref{ps}) for $\alpha=0$. The inequality (\ref{qs}) for $\alpha=0$ follows directly from (\ref{eq xs}) and (\ref{px for 0}).

Next we confirm the induction is valid. We fix $\alpha\in \mathbb{Z}_{+}^d \backslash \{0\}$ and assume that (\ref{qs}) and (\ref{ps}) hold for $\alpha^\prime$ with $|\alpha^\prime| < |\alpha|$.
Differentiating (\ref{eq xs}), we have
\begin{align}
\left\{
\begin{array}{l}
\partial_x^\alpha Q_0(s)=\int_t^s A(p(\tau,t))\partial_x^\alpha P_0(\tau) d\tau+R_{0,1}(s), \\
\partial_x^\alpha P_0(s)=-\int_t^s \nabla_x^2 V_\rho(\tau,q(\tau,t))\partial_x^\alpha Q_0(\tau) d\tau \\
\hspace{\fill}+R_{0,21}(s)+R_{0,22}(s),
\end{array}
\right.
\end{align}
where
\begin{align*}
R_{0,1}(s):=& \sum_{0\lneq\alpha^\prime\leq\alpha} \binom{\alpha}{\alpha^\prime} \int_t^s \partial_x^{\alpha^\prime}\left[A(p(\tau,t))\right]\partial_x^{\alpha-\alpha^\prime} P_0(\tau) d\tau , \\
R_{0,21}(s):=& -\sum_{0\lneq\alpha^\prime\leq\alpha}\binom{\alpha}{\alpha^\prime} \int_t^s  \partial_x^{\alpha^\prime}\left[\nabla_x^2V_\rho(\tau,q(\tau,t))\right]\partial_x^{\alpha-\alpha^\prime}Q_0(\tau) d\tau , \\
R_{0,22}(s):=& -\int_t^s \partial_x^\alpha\left[\nabla_x^2 V_\rho(\tau,q(\tau,t))\right] d\tau,
\end{align*}
and $\binom{\alpha}{\alpha^\prime}:=\prod_{j=1}^d \frac{\alpha_j !}{\alpha_j^\prime ! (\alpha_j-\alpha_j^\prime) !}$.
By (\ref{Vrho}) and assumptions of the induction, we have
\begin{align*}
|R_{0,1}(s)|&\leq C \rho^{\varepsilon_0}\langle s\rangle^{-1-\varepsilon_1}, \\
|R_{0,21}(s)|&\leq \int_s^t C\rho^{\varepsilon_0}\langle \tau\rangle^{-2-\varepsilon_1}\cdot C\rho^{\varepsilon_0}\langle \tau \rangle^{-\varepsilon_1} d\tau \leq C \rho^{\varepsilon_0}\langle s\rangle^{-1-2\varepsilon_1}, \\
|R_{0,22}(s)|&\leq  \int_s^t C\rho^{\varepsilon_0}\langle \tau\rangle^{-2-\varepsilon_1} d\tau \leq C \rho^{\varepsilon_0}\langle s\rangle^{-1-\varepsilon_1}.
\end{align*}
The similar argument as for $\alpha=0$ implies $\| \partial_x^\alpha P_0(\cdot) \|_0 \leq C_\alpha \rho^{\varepsilon_0}$ and (\ref{qs}).

\underline{Step 3}: Proof of (\ref{qxi}) and (\ref{pxi}). We use the induction with respect to $|\beta|$. First we consider the $\beta=0$ case. Similarly to Step 2, we have
\begin{align*}
\left\{
\begin{array}{l}
\nabla_\xi q(t,s)=\int_s^t A(p(\tau,s))\nabla_\xi p(\tau,s) d\tau , \\
\nabla_\xi p(t,s)=I-\int_s^t \nabla_x^2 V_\rho(\tau,q(\tau,s))\nabla_\xi q(\tau,s) d\tau,
\end{array}
\right.
\end{align*}
equivalently,
\begin{align}\label{eq xi0}
\left\{
\begin{array}{l}
Q^\prime(t)=\int_s^t A(p(\tau,s))P^\prime(\tau) d\tau -\int_s^t (A(p(\tau,s))-A(\xi))d\tau, \\
P^\prime(t)=-\int_s^t \nabla_x^2 V_\rho(\tau,q(\tau,s))Q^\prime(\tau) d\tau \\
\hspace{3.5cm}-\int_s^t (\tau-s)\nabla_x^2 V_\rho(\tau,q(\tau,s))A(\xi) d\tau,
\end{array}
\right.
\end{align}
where
\begin{align*}
Q^\prime(t)&:=\nabla_\xi q(t,s)-(t-s)A(\xi), \\
P^\prime(t)&:=\nabla_\xi p(t,s)-I.
\end{align*}
By (\ref{eq xi0}), we have
\begin{align*}
P^\prime(t)=B_s(P^\prime(\cdot))(t)+R^\prime(t),
\end{align*}
where
\begin{align*}
R^\prime(t):=-\int_s^t \nabla_x^2 V_\rho(\tau,q(\tau,s))\int_s^\tau A(p(\sigma,s)) d\sigma d\tau.
\end{align*}
Letting $\|M(\cdot)\|_1:=\sup_{t\geq s}|M(t)|$ for $M\in C(\left[s,\infty\right); M_d(\mathbb{R}))$, we have
\begin{align*}
|B_s(P(\cdot))(t)|&\leq \int_s^t C_2\rho^{\varepsilon_0}\langle \tau\rangle^{-2-\varepsilon_1}\int_s^\tau |P(\sigma)| d\sigma d\tau \\
&\leq C_2\rho^{\varepsilon_0}\|P\|_1\int_s^t \langle \tau\rangle^{-2-\varepsilon_1}(\tau-s) d\tau \\
&\leq C_2C^\prime\rho^{\varepsilon_1}\langle s \rangle^{-\varepsilon_1}\|P\|_1, \\
|R^\prime(t)|&\leq \int_s^t C\rho^{\varepsilon_1}\langle \tau \rangle^{-2-\varepsilon_1}(\tau-s) d\tau \leq C\rho^{\varepsilon_0}\langle s \rangle^{-\varepsilon_1}.
\end{align*}
Thus, if $\rho\leq (2C_2C^\prime)^{-\varepsilon_0}$, we obtain
\begin{align}\label{pxi for 0}
\|P^\prime(\cdot)\|_1=\|(1-B_s)^{-1}R^\prime(\cdot)\|_1 \leq \frac{1}{1-\|B_s\|_1}\|R^\prime(\cdot)\|_1 \leq 2C\rho^{\varepsilon_0}\langle s\rangle^{-\varepsilon_1}.
\end{align}
This proves (\ref{pxi}) for $\beta=0$. The inequality (\ref{qxi}) for $\beta=0$ follows from (\ref{p2}), (\ref{eq xi0}) and (\ref{pxi for 0}).

Next we prove the induction works.
Differentiating (\ref{eq xi0}), we have
\begin{align}
\left\{
\begin{array}{l}
\partial_\xi^\beta Q^\prime(t)=\int_s^t A(p(\tau,s))\partial_\xi^\beta P^\prime(\tau) d\tau +R^\prime_{11}(t)+R^\prime_{12}(t) , \\
\partial_\xi^\beta P^\prime(t)=-\int_s^t \nabla_x^2 V_\rho(\tau,q(\tau,s))\partial_\xi^\beta Q^\prime(\tau) d\tau +R^\prime_{21}(t)+R^\prime_{22}(t), \\
\end{array}
\right.
\end{align}
where
\begin{align*}
&R^\prime_{11}(t):= \sum_{0\lneq\beta^\prime\leq\beta} \binom{\beta}{\beta^\prime} \int_s^t \partial_\xi^{\beta^\prime} \left[A(p(\tau,s))\right]\partial_\xi^{\beta-\beta^\prime} P^\prime(\tau) d\tau , \\
&R^\prime_{12}(t):= \int_s^t \partial_\xi^\beta\left[A(p(\tau,s))-A(\xi) \right] d\tau , \\
&R^\prime_{21}(t):= -\sum_{0\lneq\beta^\prime\leq\beta}\binom{\beta}{\beta^\prime} \int_s^t \partial_\xi^{\beta^\prime}\left[\nabla_x^2V_\rho(\tau,q(\tau,s))\right]\partial_\xi^{\beta-\beta^\prime} Q^\prime(\tau) d\tau , \\
&R^\prime_{22}(t):= -\int_s^t (\tau-s)\partial_\xi^\beta\left[\nabla_x^2 V_\rho(\tau,q(\tau,s))A(\xi)\right] d\tau .
\end{align*}
Thus we have
\begin{align*}
\partial_\xi^\beta P^\prime(t)&=B_s(\partial_\xi^\beta P^\prime(\cdot))(t)-\int_s^t \nabla_x^2 V_\rho(\tau,q(\tau,s))(R^\prime_{11}(\tau)+R^\prime_{12}(\tau)) d\tau \\
&\hspace{7cm}+R^\prime_{21}(t)+R^\prime_{22}(t).
\end{align*}
If (\ref{qxi}) and (\ref{pxi}) are true for $\beta^\prime$ with $|\beta^\prime|<|\beta|$, we learn
\begin{align*}
|R^\prime_{11}(t)|&\leq C \rho^{\varepsilon_0}\langle s\rangle^{-\varepsilon_1}|t-s| , \\
|R^\prime_{12}(t)|&\leq C \sup_{|\beta^\prime|\leq|\beta|} \int_s^t |\partial_\xi^{\beta^\prime}\left[p(\tau,s)-\xi \right]| d\tau \leq C \rho^{\varepsilon_0}\langle s\rangle^{-\varepsilon_1}|t-s| , \\
|R^\prime_{21}(t)|&\leq \int_s^t C\rho^{\varepsilon_0}\langle \tau\rangle^{-2-\varepsilon_1}\cdot C\rho^{\varepsilon_0}\langle s\rangle^{-\varepsilon_1}|\tau-s| d\tau \leq C \rho^{2\varepsilon_0}\langle s\rangle^{-2\varepsilon_1} , \\
|R^\prime_{22}(t)|&\leq  \int_s^t C\rho^{\varepsilon_0}\langle \tau\rangle^{-2-\varepsilon_1}|\tau-s| d\tau \leq C \rho^{\varepsilon_0}\langle s\rangle^{-\varepsilon_1}.
\end{align*}
Using the similar argument as for $\beta=0$, we obtain (\ref{qxi}) and (\ref{pxi}) for any $\beta$.

\underline{Step 4}: Proof of (\ref{qx}) and (\ref{px}). We use the induction with respect to $|\alpha|+|\beta|$. In the $\alpha=\beta=0$ case, differentiation in $x$ implies
\begin{align*}
\left\{
\begin{array}{l}
\nabla_x q(t,s)=I+\int_s^t A(p(\tau,s))\nabla_x p(\tau,s) d\tau , \\
\nabla_x p(t,s)=-\int_s^t \nabla_x^2 V_\rho(\tau,q(\tau,s))\nabla_x q(\tau,s) d\tau.
\end{array}
\right.
\end{align*}
Letting 
\begin{align*}
Q(t)&:=\nabla_x q(t,s)-I, \\
P(t)&:=\nabla_x p(t,s),
\end{align*}
we observe
\begin{align}
\left\{
\begin{array}{l}\label{eq x0}
Q(t)=\int_s^t A(p(\tau,s))P(\tau) d\tau , \\
P(t)=-\int_s^t \nabla_x^2 V_\rho(\tau,q(\tau,s))Q(\tau) d\tau -\int_s^t \nabla_x^2 V_\rho(\tau,q(\tau,s)) d\tau.
\end{array}
\right.
\end{align}
This implies
\begin{align*}
P(t)=B_s(P(\cdot))(t)+R(t),
\end{align*}
where
\begin{align*}
R(t):=-\int_s^t \nabla_x^2 V_\rho(\tau,q(\tau,s))d\tau.
\end{align*}
Since
\begin{align*}
|R(t)|\leq \int_s^t C_2\rho^{\varepsilon_0}\langle \tau\rangle^{-2-\varepsilon_1} d\tau \leq C\rho^{\varepsilon_0}\langle s\rangle^{-1-\varepsilon_1},
\end{align*}
we have
\begin{align*}
\|P(\cdot)\|_1=\|(1-B_s)^{-1}R\|_1 \leq 2C\rho^{\varepsilon_0}\langle s \rangle^{-1-\varepsilon_1},
\end{align*}
which proves (\ref{px}) for $\alpha=\beta=0$. The inequality (\ref{qx}) follows from (\ref{px}) and (\ref{eq x0}).

We prove the induction with respect to $|\alpha|+|\beta|$ works. By (\ref{eq x0}), we have
\begin{align}
\left\{
\begin{array}{l}
\partial_x^\alpha\partial_\xi^\beta Q(t)=\int_s^t A(p(\tau,s))\partial_x^\alpha\partial_\xi^\beta P(\tau) d\tau +R_1(t) , \\
\partial_x^\alpha\partial_\xi^\beta P(t)=-\int_s^t \nabla_x^2 V_\rho(\tau,q(\tau,s))\partial_x^\alpha\partial_\xi^\beta Q(\tau) d\tau\hspace{0.5cm} \\
\hspace{\fill}+R_{21}(t)+R_{22}(t),
\end{array}
\right.
\end{align}
where
\begin{align*}
R_{1}(t)&:= \sum_{\substack{\alpha^\prime\leq\alpha,\beta^\prime\leq\beta, \\ |\alpha^\prime+\beta^\prime|\geq1}} \binom{\alpha}{\alpha^\prime}\binom{\beta}{\beta^\prime} \int_s^t \partial_x^{\alpha^\prime}\partial_\xi^{\beta^\prime}\left[A(p(\tau,s))\right]
\partial_x^{\alpha-\alpha^\prime}\partial_\xi^{\beta-\beta^\prime} P(\tau) d\tau , \\
R_{21}(t)& \\
:= -&\sum_{\substack{\alpha^\prime\leq\alpha,\beta^\prime\leq\beta, \\ |\alpha^\prime+\beta^\prime|\geq1}}
\binom{\alpha}{\alpha^\prime}\binom{\beta}{\beta^\prime}
\int_s^t  \partial_x^{\alpha^\prime}\partial_\xi^{\beta^\prime}\left[\nabla_x^2V_\rho(\tau,q(\tau,s))\right]
\partial_x^{\alpha-\alpha^\prime}\partial_\xi^{\alpha-\beta^\prime} Q(\tau) d\tau , \\
R_{22}(t)&:= -\int_s^t \partial_x^\alpha\partial_\xi^\beta\left[\nabla_x^2 V_\rho(\tau,q(\tau,s))\right] d\tau .
\end{align*}
Thus we learn
\begin{align*}
\partial_x^\alpha\partial_\xi^\beta P(t)&=B_s(\partial_x^\alpha\partial_\xi^\beta P(\cdot))(t)-\int_s^t \nabla_x^2 V_\rho(\tau,q(\tau,s))R_1(\tau) d\tau \\
&\hspace{6cm}+R_{21}(t)+R_{22}(t).
\end{align*}
By (\ref{qxi}),(\ref{pxi}) and assumptions of the induction, we have
\begin{align*}
|R_1(t)|&\leq C \rho^{\varepsilon_0}\langle s\rangle^{-1-\varepsilon_1}|t-s| , \\
|R_{21}(t)|&\leq \int_s^t C\rho^{\varepsilon_0}\langle \tau\rangle^{-2-\varepsilon_1}\cdot C\rho^{\varepsilon_0}\langle s\rangle^{-1-\varepsilon_1}|\tau-s| d\tau \leq C \rho^{2\varepsilon_0}\langle s\rangle^{-1-2\varepsilon_1} , \\
|R_{22}(t)|&\leq  \int_s^t C\rho^{\varepsilon_0}\langle \tau\rangle^{-2-\varepsilon_1} d\tau \leq C \rho^{\varepsilon_0}\langle s\rangle^{-1-\varepsilon_1} .
\end{align*}
Similarly to the argument for $\alpha=\beta=0$, we obtain (\ref{qx}) and (\ref{px}) for any $\alpha$ and $\beta$.

\underline{Step 5}: Proof of (\ref{qesti}). By (\ref{canq}) and (\ref{canp}), we have
\begin{align*}
q(t,s;x,\xi)&=x+\int_s^t v(p(\tau,s)) d\tau \\
&=x+\int_s^t v\left(p(t,s)+\int_\tau^t\nabla_x V_\rho(\sigma,q(\sigma,s)) d\sigma\right) d\tau .
\end{align*}
Thus
\begin{align*}
&q(t,s;x,\xi)-x-(t-s)v(p(t,s)) \\
&=\int_s^t \left[v\left(p(t,s)+\int_\tau^t \nabla_x V_\rho(\sigma,q(\sigma,s))d\sigma\right)-v(p(t,s))\right] d\tau .
\end{align*}
This equality and (\ref{qx})-(\ref{pxi}) imply (\ref{qesti}).
\end{proof}

Similarly to Proposition 2.2 in \cite{IsoKita}, we observe that, if $\rho$ is small enough, the maps
\begin{align*}
y &\mapsto q(s,t;y,\xi), \\
\eta &\mapsto p(t,s;x,\eta)
\end{align*}
have the corresponding inverses.

\begin{prop} \label{y eta estimates}
Fix $\rho>0$ so that $C_0\rho^{\varepsilon_0}<\frac{1}{2}$ holds, where $C_0$ is the constant in Proposition \ref{general estimates}. Then, for $x,\xi\in\mathbb{R}^d$ and $0\leq\pm s \leq\pm t $, there exist $y(s,t)=y(s,t;x,\xi)\in\mathbb{R}^d$ and $\eta(t,s)=\eta(t,s;x,\xi)\in\mathbb{R}^d$ such that 
\begin{numcases}{}
q(s,t;y(s,t;x,\xi),\xi)=x, \label{y} \\
p(t,s;x,\eta(t,s;x,\xi))=\xi, \label{eta}
\end{numcases}
and
\begin{numcases}{}
q(t,s;x,\eta(t,s;x,\xi))=y(s,t;x,\xi), \label{q eta=y} \\
p(s,t;y(s,t;x,\xi),\xi)=\eta(t,s;x,\xi). \label{p y=eta}
\end{numcases}
Furthermore, for any $x,\xi\in\mathbb{R}^d$, $0\leq\pm s \leq\pm t $ and multi-indices $\alpha$ and $\beta$,
\begin{align}
&|\partial_x^\alpha\left[\nabla_x y(s,t;x,\xi)-I \right]|\leq C^\prime_{\alpha}\rho^{\varepsilon_0}\langle s\rangle^{-\varepsilon_1}, \label{x y} \\ 
&|\partial_x^\alpha\partial_\xi^\beta\nabla_x \eta(t,s;x,\xi)|\leq C^\prime_{\alpha\beta}\rho^{\varepsilon_0}\langle s\rangle^{-1-\varepsilon_1}, \label{x eta} \\
&|\partial_\xi^\beta \left[\eta(t,s;x,\xi)-\xi\right]|\leq C^\prime_{\beta}\rho^{\varepsilon_0}\langle s\rangle^{-\varepsilon_1}, \label{xi eta} \\
\label{xi y} &|\partial_\xi^\beta\left[y(s,t;x,\xi)-x-(t-s)v(\xi)\right]| \\
&\leq C^\prime_{\beta}\rho^{\varepsilon_0}\min\{|t-s|\langle s\rangle^{-\varepsilon_1},\langle t\rangle^{1-\varepsilon_1}\}. \nonumber 
\end{align}
\end{prop}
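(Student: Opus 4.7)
The plan is to obtain $y$ and $\eta$ as inverses of the flow maps by a fixed-point argument, deduce the equivalent forms (\ref{q eta=y}) and (\ref{p y=eta}) from the group property of the Hamiltonian flow, and then extract the four derivative estimates by implicit differentiation, using the bounds of Proposition \ref{general estimates} together with the Fa\`a di Bruno formula.

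First I would establish existence of $y(s,t;x,\xi)$. From (\ref{canq}), $q(s,t;y,\xi) = y + \int_t^s v(p(\tau,t;y,\xi))\,d\tau$, and (\ref{qs}) with $\alpha=0$ gives $|\nabla_y q(s,t;y,\xi) - I| \le C_0\rho^{\varepsilon_0}\langle s\rangle^{-\varepsilon_1} < \tfrac12$ for our choice of $\rho$. Hence $y \mapsto x - [q(s,t;y,\xi)-y]$ is a strict contraction of $\mathbb{R}^N$, with unique fixed point $y(s,t;x,\xi)$; equivalently, $q(s,t;\cdot,\xi)$ is a global $C^\infty$-diffeomorphism of $\mathbb{R}^N$. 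The construction of $\eta(t,s;x,\xi)$ is analogous, using (\ref{pxi}) to see that $\xi \mapsto p(t,s;x,\xi)$ has Jacobian $I + O(\rho^{\varepsilon_0}\langle s\rangle^{-\varepsilon_1})$. The equivalences (\ref{q eta=y}) and (\ref{p y=eta}) then follow from the uniqueness of Hamiltonian trajectories: if $q(s,t;y,\xi) = x$, setting $\tilde\eta := p(s,t;y,\xi)$, the evolution from $(x,\tilde\eta)$ starting at time $s$ forward to time $t$ recovers $(y,\xi)$, so $p(t,s;x,\tilde\eta) = \xi$ and $q(t,s;x,\tilde\eta) = y$, forcing $\tilde\eta = \eta(t,s;x,\xi)$ by uniqueness.

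For (\ref{x y})--(\ref{xi eta}) I differentiate the defining relations. Differentiating $q(s,t;y,\xi) = x$ in $x$ gives $\nabla_x y = [\nabla_y q(s,t;y,\xi)]^{-1}$, and expanding $(I+E)^{-1}$ with $E := \nabla_y q - I$ yields the $\alpha = 0$ case of (\ref{x y}) via (\ref{qs}); higher-order $\partial_x^\alpha$ bounds follow by induction. For (\ref{x eta}), differentiating $p(t,s;x,\eta) = \xi$ in $x$ yields $\nabla_x \eta = -[\nabla_\eta p]^{-1}\nabla_x p$, and combining (\ref{px}) and (\ref{pxi}) gives the rate $\rho^{\varepsilon_0}\langle s\rangle^{-1-\varepsilon_1}$; the general $(\alpha,\beta)$ case comes by induction. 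For (\ref{xi eta}) with $\beta = 0$, substituting $p(t,s;x,\eta) = \xi$ into (\ref{canp}) gives $\eta - \xi = \int_s^t \nabla_x V_\rho(\tau,q(\tau,s;x,\eta))\,d\tau$, and (\ref{Vrho}) with $|\alpha|=1$ bounds this by $C\rho^{\varepsilon_0}\langle s\rangle^{-\varepsilon_1}$; the $\beta \ge 1$ case comes from $\nabla_\xi\eta = [\nabla_\eta p]^{-1}$ together with (\ref{pxi}).

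Finally, for (\ref{xi y}) I use the equivalent form $y = q(t,s;x,\eta)$. By (\ref{qesti}) and $p(t,s;x,\eta) = \xi$,
\[
y - x - (t-s)v(\xi) = q(t,s;x,\eta) - x - (t-s)v(p(t,s;x,\eta)),
\]
whose right-hand side is exactly the quantity controlled by (\ref{qesti}); this settles $\beta = 0$. For $\beta \ge 1$, I apply $\partial_\xi^\beta$ to this identity and expand by Fa\`a di Bruno into products of $\partial_\xi^{\beta'}\eta$ (controlled by (\ref{xi eta})) with derivatives of $q(t,s;x,\cdot)$ (controlled by (\ref{qxi}) and (\ref{qesti})); the $(t-s)v(\xi)$ term cancels with the principal part. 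The main obstacle is the bookkeeping in these higher-order estimates, specifically ensuring that the min-type bound in (\ref{xi y}) survives differentiation: one must carry along both alternatives $|t-s|\langle s\rangle^{-\varepsilon_1}$ and $\langle t\rangle^{1-\varepsilon_1}$ through the Fa\`a di Bruno expansion and invoke the appropriate case of (\ref{qesti}) depending on whether $|t-s| \le \langle s\rangle$ or not.
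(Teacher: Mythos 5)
Your proposal is correct and follows essentially the same route as the paper: the paper obtains $y$ and $\eta$ from the bounds $|\nabla_x q-I|<\tfrac12$, $|\nabla_\xi p-I|<\tfrac12$ via a global inversion theorem (your contraction argument is just a direct proof of that), and then derives (\ref{x y})--(\ref{xi y}) by the same implicit differentiation and induction, using (\ref{qesti}) (in integral form) together with (\ref{qxi}) and (\ref{xi eta}) for the last estimate. No gaps.
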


\begin{proof}
\underline{Step 1}. By $|\nabla_x q(s,t;x,\xi)-I|<\frac{1}{2}$, $|\nabla_\xi p(t,s;x,\xi)-I|<\frac{1}{2}$ and Schwartz's global inversion theorem (\cite{D-G}, Proposition A.7.1), we have the existence and uniqueness of $y(s,t;x,\xi)$ and $\eta(t,s;x,\xi)$ satisfying (\ref{y}) and (\ref{eta}).
The equalities (\ref{q eta=y}) and (\ref{p y=eta}) are shown by (\ref{y}) and (\ref{eta}).

\underline{Step 2}: Proof of (\ref{x y}). Differentiation of (\ref{y}) in $x$ implies
\begin{align} \label{qx yx}
\nabla_x q(s,t;y(s,t),\xi) \nabla_x y(s,t)=I.
\end{align}
We have by (\ref{qs})
\begin{align*}
|\nabla_x y(s,t)-I|&=|(\nabla_x q(s,t;y(s,t),\xi))^{-1}-I| \\
&\leq C|\nabla_x q(s,t;y(s,t),\xi)-I| \\
&\leq C\rho^{\varepsilon_0}\langle s\rangle^{-\varepsilon_1}.
\end{align*}

Differentiating (\ref{qx yx}), we have for $\alpha\neq0$
\begin{align*}
&\nabla_x q(s,t;y(s,t),\xi) \partial_x^\alpha\nabla_x y(s,t) \\
&=-\sum_{0\lneq\alpha^\prime\leq\alpha} \binom{\alpha}{\alpha^\prime} \partial_x^{\alpha^\prime}\left[\nabla_x q(s,t;y(s,t),\xi)\right] \partial_x^{\alpha-\alpha^\prime}\nabla_x y(s,t).
\end{align*}
Using (\ref{qs}) and the induction with respect to $|\alpha|$, we observe that the RHS of the above equality is bounded by $C\rho^{\varepsilon_0}\langle s\rangle^{-\varepsilon_1}$.
Thus we have $|\partial_x^\alpha\nabla_x y(s,t)|\leq C^\prime_{\alpha}\rho^{\varepsilon_0}\langle s\rangle^{-\varepsilon_1}$.

\underline{Step 3}: Proof of (\ref{xi eta}). By (\ref{p y=eta}), we observe for $\beta=0$
\begin{align*}
|\eta(t,s)-\xi|&=|p(s,t;y(s,t),\xi)-\xi| \\
&=\left|\int_s^t \nabla_x V_\rho(\tau,q(\tau,t;y(s,t),\xi)) d\tau\right| \\
&\leq C\rho^{\varepsilon_0}\langle s\rangle^{-\varepsilon_1}.
\end{align*}

In the case of $|\beta|=1$, we have by differentiation of (\ref{eta}) in $\xi$
\begin{align*}
\nabla_\xi p(t,s;x,\eta(t,s)) \nabla_\xi \eta(t,s)=I.
\end{align*}
Similarly to Step 2, we obtain by (\ref{pxi})
\begin{align*}
|\nabla_\xi \eta(t,s)-I|&\leq C|\nabla_\xi p(t,s;x,\eta(t,s))-I| \\
&\leq C\rho^{\varepsilon_0}\langle s\rangle^{-\varepsilon_1}.
\end{align*}

In the other cases, we learn by (\ref{eta})
\begin{align*}
&\nabla_\xi p(t,s;x,\eta(t,s)) \partial_\xi^\beta\nabla_\xi \eta(t,s) \\
&=-\sum_{0\lneq\beta^\prime \leq \beta} \binom{\beta}{\beta^\prime} \partial_\xi^{\beta^\prime}[\nabla_\xi p(t,s;x,\eta(t,s))] \partial_\xi^{\beta-\beta^\prime}\nabla_\xi \eta(t,s), \quad \beta \neq 0.
\end{align*}
The induction with respect to $|\beta|$ and (\ref{pxi}) imply each term in the RHS is bounded by $C\rho^{\varepsilon_0}\langle s\rangle^{-\varepsilon_1}$. Thus (\ref{xi eta}) holds for any $\beta$.

\underline{Step 4}: Proof of (\ref{x eta}). Differentiating (\ref{eta}) in $x$, we have
\begin{align*}
\nabla_x p(t,s;x,\eta(t,s))+\nabla_\xi p(t,s;x,\eta(t,s)) \nabla_x \eta(t,s)=0.
\end{align*}
This equality and (\ref{px}) imply
\begin{align*}
|\nabla_x \eta(t,s)|&=|(\nabla_\xi p(t,s;x,\eta(t,s)))^{-1}\nabla_x p(t,s;x,\eta(t,s))| \\
&\leq C|\nabla_x p(t,s;x,\eta(t,s))| \\
&\leq C\rho^{\varepsilon_0}\langle s\rangle^{-1-\varepsilon_1},
\end{align*}
which proves (\ref{x eta}) for $\alpha=\beta=0$.
If $\alpha+\beta\neq0$, we have
\begin{align*}
\nabla_\xi &p(t,s;x,\eta(t,s)) \partial_x^\alpha\partial_\xi^\beta\nabla_x \eta(t,s) \\
=&-\partial_x^\alpha\partial_\xi^\beta\left[\nabla_x p(t,s;x,\eta(t,s))\right] \\
&-\sum_{\substack{\alpha^\prime\leq\alpha,\beta^\prime\leq\beta, \\ |\alpha^\prime+\beta^\prime|\geq1}} \binom{\alpha}{\alpha^\prime}\binom{\beta}{\beta^\prime} \partial_x^{\alpha^\prime}\partial_\xi^{\beta^\prime}[\nabla_\xi p(t,s;x,\eta(t,s))] \partial_x^{\alpha-\alpha^\prime}\partial_\xi^{\beta-\beta^\prime}\nabla_x \eta(t,s).
\end{align*}
Thus (\ref{x eta}) is proved by (\ref{xi eta}), (\ref{px}), (\ref{pxi}) and the induction with respect to $|\alpha|+|\beta|$.

\underline{Step 5}: Proof of (\ref{xi y}). Similarly to the proof of (\ref{qesti}) in Proposition \ref{general estimates}, we have
\begin{align*}
&y(s,t)-x-(t-s)v(\xi) \\
&=q(t,s;x,\eta(t,s))-x-(t-s)v(p(t,s;x,\eta(t,s))) \\
&=\int_s^t \left[v\left(\xi+\int_\tau^t \nabla_x V_\rho(\sigma,q(\sigma,s;x,\eta(t,s)))d\sigma\right)-v(\xi)\right] d\tau.
\end{align*}
Using this equality, (\ref{qxi}) and (\ref{xi eta}), we obtain (\ref{xi y}).
\end{proof}

We define
\begin{align*}
\phi(t;x,\xi):=u(t;x,\eta(t,0;x,\xi)),
\end{align*}
where
\begin{align*}
u(t;x,\eta):=x\cdot\eta+\int_0^t\{h_\rho-x\cdot\nabla_x h_\rho\}(\tau,q(\tau,0;x,\eta),p(\tau,0;x,\eta))d\tau .
\end{align*}
Then a direct calculus implies that $\phi$ satisfies the Hamilton-Jacobi equation
\begin{align} \label{Hamilton-Jacobi}
\left\{
\begin{array}{l}
\partial_t\phi(t;x,\xi)=h_\rho(t,\nabla_\xi\phi(t;x,\xi),\xi), \\
\phi(0;x,\xi)=x\cdot\xi,
\end{array}
\right.
\end{align}
and the relation between $\phi$ and the functions $y$ and $\eta$ in Proposition \ref{y eta estimates}:
\begin{align} \label{phi relations}
\left\{
\begin{array}{l}
\nabla_x\phi(t;x,\xi)=\eta(t,0;x,\xi), \\
\nabla_\xi\phi(t;x,\xi)=y(0,t;x,\xi).
\end{array}
\right.
\end{align}

\begin{rem}
The relation (\ref{phi relations}) and Proposition \ref{y eta estimates} imply the estimate
\begin{align}\label{x y improved}
|\partial_x^\alpha\partial_\xi^\beta\left[\nabla_x y(s,t;x,\xi)-I \right]|\leq C^\prime_{|\alpha|+|\beta|}\rho^{\varepsilon_0}\langle s\rangle^{-\varepsilon_1}
\end{align}
holds for $|\beta|\geq1$. Hence (\ref{x y}) is extended to (\ref{x y improved}) for any $\alpha$ and $\beta$.
\end{rem}

Now, we construct outgoing and incoming solutions of the eikonal equation (\ref{eikonal0}).

\begin{lem}
The limits
\begin{align} \label{phi definition}
\phi_\pm(x,\xi):=\lim_{t\to\pm\infty} (\phi(t;x,\xi)-\phi(t;0,\xi))
\end{align}
exist, are smooth in $\mathbb{R}^{2d}$ and
\begin{align}\label{periodic}
\phi_\pm(x,\xi+2\pi m)=\phi_\pm(x,\xi)+2\pi x\cdot m, \quad x,\xi\in\mathbb{R}^d, \ m\in\mathbb{Z}^d.
\end{align}
\end{lem}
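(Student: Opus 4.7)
The plan is to use the relation $\nabla_x\phi(t;x,\xi)=\eta(t,0;x,\xi)$ from (\ref{phi relations}) together with the fundamental theorem of calculus. Integrating $\frac{d}{d\theta}\phi(t;\theta x,\xi)=x\cdot\eta(t,0;\theta x,\xi)$ along $\theta\in[0,1]$ gives
\begin{align*}
\phi(t;x,\xi)-\phi(t;0,\xi)=\int_0^1 x\cdot\eta(t,0;\theta x,\xi)\,d\theta.
\end{align*}
Hence it suffices to show that $\eta(t,0;x,\xi)$ converges as $t\to\pm\infty$ in the $C^\infty$ topology, uniformly on compact subsets of $\mathbb{R}^{2N}$; the limit $\eta_\pm$ then gives $\phi_\pm(x,\xi)=\int_0^1 x\cdot\eta_\pm(\theta x,\xi)\,d\theta\in C^\infty(\mathbb{R}^{2N})$.

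To establish the $C^\infty$-convergence of $\eta$, I would differentiate the defining relation $p(t,0;x,\eta(t,0;x,\xi))=\xi$ in $t$. Combined with the Hamilton equation (\ref{canp}), which gives $\partial_t p(t,0;x,\eta)=-\nabla_x V_\rho(t,q(t,0;x,\eta))$, this yields
\begin{align*}
\partial_t\eta(t,0;x,\xi)=\bigl[\nabla_\xi p(t,0;x,\eta)\bigr]^{-1}\nabla_x V_\rho(t,q(t,0;x,\eta)),\qquad \eta=\eta(t,0;x,\xi).
\end{align*}
By (\ref{Vrho}), $|\nabla_x V_\rho(t,\cdot)|\leq C\rho^{\varepsilon_0}\langle t\rangle^{-1-\varepsilon_1}$, while $(\nabla_\xi p)^{-1}$ is uniformly bounded by (\ref{pxi}); hence $|\partial_t\eta|\leq C\langle t\rangle^{-1-\varepsilon_1}$ is integrable, so $\eta(t,0;x,\xi)$ converges pointwise. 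Differentiating $j$ times in $(x,\xi)$ and applying Faà di Bruno produces sums of terms containing a single factor $\nabla_x^{m+1}V_\rho=O(\langle t\rangle^{-m-1-\varepsilon_1})$ multiplied by $m$-fold products of derivatives of the implicit argument $q(t,0;x,\eta(t,0;x,\xi))$ which, by Propositions \ref{general estimates} and \ref{y eta estimates}, grow at most like $\langle t\rangle^m$ (each $\xi$-derivative contributes a factor of at most $\langle t\rangle$, while $x$-derivatives remain bounded). The net bound is $O(\langle t\rangle^{-1-\varepsilon_1})$, uniformly on compact subsets of $\mathbb{R}^{2N}$, giving integrability of $\partial_x^\alpha\partial_\xi^\beta\partial_t\eta$ and hence the claimed $C^\infty$-convergence. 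The bookkeeping of these chain-rule terms is the main technical obstacle, though no new idea beyond (\ref{Vrho}) is required.

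Finally, the periodicity (\ref{periodic}) follows from that of $h_0$ and the $\xi$-independence of $V_\rho$. Indeed, $h_\rho(t,x,\xi+2\pi m)=h_\rho(t,x,\xi)$ and $v(\xi+2\pi m)=v(\xi)$, so the canonical equations (\ref{canq})--(\ref{canp}) yield $q(t,s;x,\xi+2\pi m)=q(t,s;x,\xi)$ and $p(t,s;x,\xi+2\pi m)=p(t,s;x,\xi)+2\pi m$, hence $\eta(t,0;x,\xi+2\pi m)=\eta(t,0;x,\xi)+2\pi m$. In the formula $\phi(t;x,\xi)=x\cdot\eta+\int_0^t\{h_\rho-x\cdot\nabla_x h_\rho\}(\tau,q,p)\,d\tau$, the integrand is invariant under $p\mapsto p+2\pi m$ (since $\nabla_x h_\rho=\nabla_x V_\rho$ is independent of $\xi$), so $\phi(t;x,\xi+2\pi m)=\phi(t;x,\xi)+2\pi x\cdot m$. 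Since the shift vanishes at $x=0$, passing to the limit $t\to\pm\infty$ yields (\ref{periodic}).
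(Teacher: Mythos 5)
Your proposal is correct and follows essentially the same route as the paper: both reduce the problem to the convergence of $\eta(t,0;x,\xi)=\nabla_x\phi(t;x,\xi)$ with all its $(x,\xi)$-derivatives, using the integrable decay $O(\langle t\rangle^{-1-\varepsilon_1})$ coming from (\ref{Vrho}) together with the derivative bounds of Propositions \ref{general estimates} and \ref{y eta estimates}, and then recover $\phi_\pm$ by integrating from $0$ to $x$; the periodicity argument via $q(t,0;x,\xi+2\pi m)=q(t,0;x,\xi)$ and $\eta(t,0;x,\xi+2\pi m)=\eta(t,0;x,\xi)+2\pi m$ is also the paper's. The only cosmetic difference is that you obtain the convergence of $\eta$ by integrating $\partial_t\eta$ in $t$, whereas the paper writes $\eta$ as $\xi$ plus an explicit $\tau$-integral of $\nabla_x V_\rho$ along the flow and estimates that integrand directly.
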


\begin{proof}
We define
\begin{align*}
R(t,x,\xi):=\phi(t;x,\xi)-\phi(t;0,\xi).
\end{align*}
Then we have
\begin{align*}
\nabla_x R(t,x,\xi)&=\eta(t,0;x,\xi)=p(0,t;y(0,t;x,\xi),\xi) \\
&=\xi+\int_0^t(\nabla_x V_\rho)(\tau,q(\tau,t;y(0,t;x,\xi),\xi))d\tau \nonumber \\
&=\xi+\int_0^t(\nabla_x V_\rho)(\tau,q(\tau,0;x,\eta(t,0;x,\xi)))d\tau \nonumber .
\end{align*}
Since
\begin{align*}
|\partial_x^\alpha\partial_\xi^\beta [(\nabla_x V_\rho)(\tau,q(\tau,0;x,\eta(t,0;x,\xi)))]|\leq C_{\alpha\beta}\langle\tau\rangle^{-1-\varepsilon_1},
\end{align*}
$\nabla_x R(t,x,\xi)$ converges to a smooth function uniformly in $(x,\xi)\in\mathbb{R}^{2d}$. Thus
\begin{align} \label{R representation}
\partial_\xi^\beta R(t,x,\xi)=x\cdot\int_0^1 \nabla_x \partial_\xi^\beta R(t,\theta x,\xi) d\theta
\end{align}
converges locally uniformly in $\mathbb{R}^{2d}$. This implies the smoothness of $\phi_\pm$.

It is easy to see (\ref{periodic}) if we remark
\begin{align*}
\eta(t,0;x,\xi+2\pi m)&=\eta(t,0;x,\xi)+2\pi m , \\
q(t,0;x,\xi+2\pi m)&=q(t,0;x,\xi)
\end{align*}
for $x,\xi\in\mathbb{R}^d$, $t\in\mathbb{R}$ and $m\in\mathbb{Z}^d$.
\end{proof}

Next we consider properties of $\phi_\pm$ in the ``outgoing'' and ``incoming'' regions.
We prepare improved estimates of Proposition \ref{general estimates} for an orbit which is outgoing or incoming. 

\begin{lem} \label{outgoing estimates}
Let $(q,p)(t)=(q,p)(t,0;x,\xi)$ be an orbit satisfying (\ref{canq}) and (\ref{canp}). Suppose
\begin{align*} 
|q(\tau)|\geq b|\tau|+d,\quad \pm \tau\geq 0
\end{align*}
for some $b>0$ and $d\geq0$.
Then there exist $l_{\alpha\beta}, l_\beta \geq2$ such that for $\pm t\geq 0$ and $\alpha$, $\beta\in\mathbb{N}_{\geq0}^d$,
\begin{align}
&|p(t)-\xi|\leq C b^{-1} \langle d\rangle^{-\varepsilon}, \label{p2 o}\\
&|\partial_x^\alpha\partial_\xi^\beta\left[\nabla_x q(t)-I \right]|\leq C_{\alpha\beta} b^{-l_{\alpha\beta}} \langle d\rangle^{-1-|\alpha|-\varepsilon}|t|,\label{qx o} \\ 
&|\partial_x^\alpha\partial_\xi^\beta \nabla_x p(t)|\leq C_{\alpha\beta} b^{-l_{\alpha\beta}} \langle d\rangle^{-1-|\alpha|-\varepsilon}, \label{px o}\\
&|\partial_\xi^\beta\left[\nabla_\xi q(t)-tA(\xi) \right]|\leq C_{\beta} b^{-l_{\beta}} \langle d\rangle^{-\varepsilon}|t|,\label{qxi o} \\ 
&|\partial_\xi^\beta\left[\nabla_\xi p(t)-I\right]|\leq C_{\beta} b^{-l_{\beta}} \langle d\rangle^{-\varepsilon}.\label{pxi o}
\end{align}
\end{lem}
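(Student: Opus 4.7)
The plan is to mirror the proof of Proposition \ref{general estimates} verbatim, replacing the smallness that came from the time-cutoff factor $\chi(\langle\log\langle\tau\rangle\rangle x/\langle\tau\rangle)$ and the spatial cutoff $\chi(\rho x)$ with the smallness furnished by the outgoing hypothesis $|q(\tau)|\geq b|\tau|+d$. Concretely, along such an orbit, (\ref{Vrho}) and Assumption \ref{ass2} give
\begin{align*}
|\partial_x^\alpha V_\rho(\tau,q(\tau))|\leq C_\alpha\langle q(\tau)\rangle^{-|\alpha|-\varepsilon}\leq C_\alpha (b|\tau|+d)^{-|\alpha|-\varepsilon},
\end{align*}
so after $\tau$-integration one gains a factor of $b^{-1}$ and loses one power of $\langle d\rangle$: e.g.\ $\int_0^\infty(b\tau+d)^{-1-\varepsilon}d\tau\leq Cb^{-1}\langle d\rangle^{-\varepsilon}$ and $\int_0^\infty(b\tau+d)^{-2-\varepsilon}d\tau\leq Cb^{-1}\langle d\rangle^{-1-\varepsilon}$. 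These are the replacements for the weight $\langle\tau\rangle^{-1-\varepsilon_1}$ used in the earlier proposition.

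First I would prove (\ref{p2 o}) directly: from (\ref{canp}), $|p(t)-\xi|\leq\int_0^{|t|}|\nabla_x V_\rho(\tau,q(\tau))|d\tau\leq Cb^{-1}\langle d\rangle^{-\varepsilon}$. For the base case $|\alpha|=|\beta|=0$ of (\ref{qx o}) and (\ref{px o}), I set up exactly the Volterra system from step 2 in the proof of Proposition \ref{general estimates}, with $Q(t)=\nabla_x q(t)-I$ and $P(t)=\nabla_x p(t)$. The auxiliary operator $B$ (as in (\ref{eq x0})) is now estimated against the norm $\|M\|:=\sup_t\langle d\rangle^{1+\varepsilon}|M(t)|$ (rather than a norm involving $\langle s\rangle$), and its operator norm is bounded by $Cb^{-2}\langle d\rangle^{-\varepsilon}$ from the doubled integration; similarly the inhomogeneity $R(t)=-\int_0^t\nabla_x^2 V_\rho\, d\tau$ is of size $b^{-1}\langle d\rangle^{-1-\varepsilon}$. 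Contraction (for $b$ large, but with the $b^{-l}$ factor tracking how large) gives the zeroth-order estimate. The corresponding estimate on $Q$ follows by plugging back into the first equation of (\ref{eq x0}), and the extra factor of $|t|$ on the right-hand side of (\ref{qx o}) comes from this integration (since $\int_0^t |P(\tau)|\,d\tau$ is $O(b^{-l}\langle d\rangle^{-1-\varepsilon}|t|)$).

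Higher-order estimates proceed by induction on $|\alpha|+|\beta|$, exactly as in steps 2--4 of Proposition \ref{general estimates}: differentiating the Volterra system produces the same structural remainder terms $R_1, R_{21}, R_{22}$, and each lower-order factor is controlled by the inductive hypothesis. The power $l_{\alpha\beta}$ (resp.\ $l_\beta$) accumulates because each derivative falling on $q(\tau)$ or $p(\tau)$ in the composed quantities $\nabla_x^{\alpha'}\partial_\xi^{\beta'}[V_\rho(\tau,q(\tau))]$ produces, via Faà di Bruno, factors of $\partial_x^{\alpha''}\partial_\xi^{\beta''}q$ whose bounds already carry powers of $b^{-1}$; I would show $l_{\alpha\beta}=2(|\alpha|+|\beta|+1)$ (or any sufficient exponent) suffices. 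For (\ref{qxi o}) and (\ref{pxi o}) I mimic step 3 of the earlier proof: subtract the free drift $tA(\xi)$ and $I$ from $\nabla_\xi q$ and $\nabla_\xi p$ respectively, and use that the difference $A(p(\tau))-A(\xi)$ is $O(b^{-1}\langle d\rangle^{-\varepsilon})$ by (\ref{p2 o}).

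The main obstacle is the careful bookkeeping of the exponents $l_{\alpha\beta}$: at each inductive step a loss of $b^{-1}$ appears both from commutator terms and from the contraction of the Volterra operator, and one must verify that the inhomogeneities retain their $\langle d\rangle^{-1-|\alpha|-\varepsilon}$ weights (with the extra $|t|$ in (\ref{qx o}), (\ref{qxi o})) so that the induction closes. There is no conceptual novelty beyond Proposition \ref{general estimates}; the work is entirely in aligning the new decay $(b\tau+d)^{-|\alpha|-\varepsilon}$ with the fixed-point scheme already used.
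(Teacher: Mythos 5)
Your overall strategy is exactly the paper's: the printed proof consists of the single remark that one repeats the argument of Proposition \ref{general estimates} with the pointwise bound $|\partial_x^\alpha V_\rho(\tau,q(\tau))|\leq C_\alpha\langle b|\tau|+d\rangle^{-|\alpha|-\varepsilon}$ in place of the time decay, checks (\ref{p2 o}) by the integral $\int_0^\infty\langle b\tau+d\rangle^{-1-\varepsilon}d\tau\leq Cb^{-1}\langle d\rangle^{-\varepsilon}$, and declares the rest similar. Your identification of the key substitution, the gain of $b^{-1}$ per $\tau$-integration, the source of the extra $|t|$ in (\ref{qx o}) and (\ref{qxi o}), and the inductive bookkeeping of the exponents $l_{\alpha\beta}$ is all consistent with that.

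There is, however, one step that would fail as you have set it up: the inversion of $1-B$. You propose to bound the kernel of the Volterra operator $B$ by the outgoing decay as well, obtaining $\|B\|\leq Cb^{-2}\langle d\rangle^{-\varepsilon}$, and then to invoke a contraction ``for $b$ large, with the $b^{-l}$ factor tracking how large.'' This does not close: the lemma must hold for all $b>0$ (in the application $b$ is comparable to $|v(\xi)|$, which can be arbitrarily small on the region where these estimates are used), and when $Cb^{-2}\langle d\rangle^{-\varepsilon}\geq1$ the Neumann series bound $\|(1-B)^{-1}\|\leq(1-\|B\|)^{-1}$ is unavailable. The generic Volterra/Gronwall substitute gives $\|(1-B)^{-1}\|\leq e^{Cb^{-2}\langle d\rangle^{-\varepsilon}}$, which is not $\mathcal{O}(b^{-l})$ for any $l$, so the polynomial dependence on $b^{-1}$ asserted in (\ref{qx o})--(\ref{pxi o}) would be lost. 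The repair — and what ``calculate similarly to Proposition \ref{general estimates}'' implicitly requires — is to keep estimating the kernel of $B$ by the \emph{other} branch of (\ref{Vrho}), namely $|\nabla_x^2V_\rho(\tau,\cdot)|\leq C_2\rho^{\varepsilon_0}\langle\tau\rangle^{-2-\varepsilon_1}$, which is still valid along the orbit and yields $\|B\|\leq C\rho^{\varepsilon_0}\leq\frac{1}{2}$ uniformly in $b$ and $d$ (with respect to the plain sup norm, since your weight $\langle d\rangle^{1+\varepsilon}$ is constant in $t$ and does not change the operator norm). The outgoing decay should be used only for the inhomogeneous terms $R$, $R_{21}$, $R_{22}$, etc., which is where the factors $b^{-1}\langle d\rangle^{-1-|\alpha|-\varepsilon}$ are generated. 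With that single correction your induction goes through as described.
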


\begin{proof}
We calculate similarly to Proposition \ref{general estimates}, whereas we use the following estimate instead:
\begin{align*}
|\partial_x^\alpha V_\rho(t,q(t))| \leq C_\alpha \langle q(t) \rangle^{-|\alpha|-\varepsilon}
\leq C_\alpha \langle b|t|+d \rangle^{-|\alpha|-\varepsilon}.
\end{align*}

\end{proof}

The next lemma gives improved estimates of Proposition \ref{y eta estimates} for outgoing or incoming orbits.

\begin{lem}\label{phi estimate}
Let $b,d\geq0$, $b\neq0$ and $x$, $\xi\in\mathbb{R}^d$ satisfy
\begin{align*}
|q(\tau,0;x,\eta(t,0;x,\xi))| \geq b|\tau|+d , \quad 0 \leq \pm\tau \leq \pm t
\end{align*}
for any $\pm t \geq 0$. Then there exist $l_{\alpha\beta}^\prime, l_\beta^\prime \geq2$ such that, for $\pm t \geq 0$,
\begin{align}
&|\partial_x^\alpha\partial_\xi^\beta\left[\nabla_x \eta(t,0;x,\xi)\right]|\leq C_{\alpha\beta} b^{-l^\prime_{\alpha\beta}} \langle d\rangle^{-1-|\alpha|-\varepsilon}, \label{x eta o} \\
&|\partial_\xi^\beta \left[\eta(t,0;x,\xi)-\xi\right]|\leq C_{\beta} b^{-l^\prime_{\beta}} \langle d\rangle^{-\varepsilon}. \label{xi eta o} 
\end{align}
\end{lem}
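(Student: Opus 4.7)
The plan is to mirror the structure of the proof of Proposition \ref{y eta estimates} (steps 3 and 4), replacing every appeal to the general estimates of Proposition \ref{general estimates} by the corresponding ``outgoing'' estimates of Lemma \ref{outgoing estimates}. The hypothesis $|q(\tau,0;x,\eta(t,0;x,\xi))| \geq b|\tau|+d$ is precisely what is needed to feed Lemma \ref{outgoing estimates} along the relevant orbit, so nothing new about the flow itself has to be proven.

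For the base case $\beta=0$ of \eqref{xi eta o}, I would combine \eqref{eta} with \eqref{p y=eta} (or directly use the integral form of $p(t,0)$) to write
\begin{align*}
\eta(t,0;x,\xi)-\xi=\int_0^t \nabla_x V_\rho(\tau, q(\tau,0;x,\eta(t,0;x,\xi)))\,d\tau,
\end{align*}
and then estimate $|\nabla_x V_\rho(\tau,q(\tau))| \leq C\langle b|\tau|+d\rangle^{-1-\varepsilon}$ using the hypothesis, which integrates to $Cb^{-1}\langle d\rangle^{-\varepsilon}$. For $|\nabla_x\eta|$, I would differentiate $p(t,0;x,\eta(t,0;x,\xi))=\xi$ in $x$ to get
\begin{align*}
\nabla_x p(t,0;x,\eta(t,0)) + \nabla_\xi p(t,0;x,\eta(t,0))\,\nabla_x\eta(t,0)=0,
\end{align*}
and solve for $\nabla_x\eta$; since $|\nabla_\xi p-I|$ is small by \eqref{pxi o}, the inverse matrix is controlled by a geometric series, and \eqref{px o} yields $|\nabla_x\eta(t,0)|\leq Cb^{-l}\langle d\rangle^{-1-\varepsilon}$. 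The analogous $\xi$-derivative identity $\nabla_\xi p(t,0;x,\eta(t,0))\,\nabla_\xi\eta(t,0)=I$ gives $|\nabla_\xi\eta(t,0)-I|\leq Cb^{-l}\langle d\rangle^{-\varepsilon}$.

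The higher-order estimates are then obtained by induction on $|\alpha|+|\beta|$. Differentiating the identity $p(t,0;x,\eta(t,0;x,\xi))=\xi$ repeatedly via Faà di Bruno, one gets an expression of the form
\begin{align*}
\nabla_\xi p(t,0;x,\eta(t,0))\,\partial_x^\alpha\partial_\xi^\beta\nabla_x\eta(t,0) = -\partial_x^\alpha\partial_\xi^\beta\bigl[\nabla_x p(t,0;x,\eta(t,0))\bigr]+\text{lower-order terms},
\end{align*}
and an analogous relation for $\partial_\xi^\beta\nabla_\xi\eta$. The lower-order terms are products of derivatives of $\nabla_\xi p$, $\nabla_x p$ (controlled by \eqref{qx o}--\eqref{pxi o}) and derivatives of $\eta$ of strictly lower order (controlled by the induction hypothesis); multiplying by $(\nabla_\xi p)^{-1}$ preserves the $\langle d\rangle^{-1-|\alpha|-\varepsilon}$ (resp.\ $\langle d\rangle^{-\varepsilon}$) decay at the expense of raising the power $l^\prime$ of $b^{-1}$.

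The main obstacle I anticipate is purely bookkeeping: tracking how the exponents $l^\prime_{\alpha\beta}$ and $l^\prime_\beta$ grow under Faà di Bruno, and ensuring that each lower-order term retains the right decay in $\langle d\rangle$. This is unavoidable because every differentiation of a quantity like $\nabla_x p(t,0;x,\eta(t,0))$ either hits $\nabla_x p$ directly (producing an extra factor of $\langle d\rangle^{-1}$ by \eqref{px o}) or goes through $\eta$ (producing a factor controlled by the induction hypothesis). No new analytical ingredient is required; once the induction is set up carefully, both estimates follow.
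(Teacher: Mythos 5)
Your proposal matches the paper's proof: the paper establishes the base case $\beta=0$ of \eqref{xi eta o} from the integral representation $\eta(t)-\xi=p(0,t;y(0,t;x,\xi),\xi)-\xi$ together with the lower bound on $|q|$ (the same computation you give, since $q(\tau,t;y(0,t;x,\xi),\xi)=q(\tau,0;x,\eta(t,0;x,\xi))$), and then dispatches the remaining estimates exactly as you propose, by rerunning the inductive argument of Proposition \ref{y eta estimates} with the outgoing estimates of Lemma \ref{outgoing estimates} in place of Proposition \ref{general estimates}. No further comment is needed.
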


\begin{proof}
The proofs are similar to those of (\ref{x eta}) and (\ref{xi eta}) if we use
\begin{align*}
|\partial_x^\alpha V_\rho(\tau,q(\tau,0;x,\eta(t,0;x,\xi)))|
\leq C_\alpha \langle b|\tau|+d \rangle^{-|\alpha|-\varepsilon}, \quad 0 \leq \pm\tau \leq \pm t.
\end{align*}

\end{proof}

Using the above two lemmas, we have the estimate of $\phi_\pm(x,\xi)-x\cdot\xi$ on the outgoing and incoming region, respectively.
See Proposition 2.4 in \cite{IsoKita} for the case of Schr\"odinger operators.

\begin{prop}
\begin{align} \label{phi+- estimate}
|\partial_x^\alpha \partial_\xi^\beta[\phi_\pm(x,\xi)-x\cdot\xi]|
\leq C_{\alpha\beta} |v(\xi)|^{-l_{\alpha\beta}} \langle x\rangle^{1-|\alpha|-\varepsilon}
\end{align}
on $\{(x,\xi) \mid |x|^{\varepsilon_1} |v(\xi)|^{1-\varepsilon_1}\geq C_{\varepsilon_1}, \pm \cos(x,v(\xi))\geq 0 \}$, respectively.
\end{prop}

\begin{proof}
On $\{(x,\xi) \mid x, v(\xi)\neq0,\ \pm\cos(x,v(\xi))\geq0\}$, (\ref{p1}), (\ref{p2}) and (\ref{qesti}) imply for $0\leq\pm \tau \leq \pm t$,
\begin{align*}
|q(\tau,0;x,\eta(t,0;x,\xi))|\geq& |x+\tau v(p(\tau,0;x,\eta(t,0;x,\xi)))|-C_0\langle \tau \rangle^{1-\varepsilon_1} \\
=& |x+\tau v(p(\tau,t;y(0,t;x,\xi),\xi))|-C_0\langle \tau \rangle^{1-\varepsilon_1} \\
\geq& |x+\tau v(\xi)|-C \langle \tau \rangle^{1-\varepsilon_1}-C_0\langle \tau \rangle^{1-\varepsilon_1} \\
\geq& \frac{1}{\sqrt{2}}(|x|+|\tau v(\xi)|) -C\langle \tau \rangle^{1-\varepsilon_1}.
\end{align*}
If we remark
\begin{align*}
|x|+|\tau v(\xi)| \geq& \left(\frac{1}{\varepsilon_1}|x|\right)^{\varepsilon_1}\left(\frac{1}{1-\varepsilon_1}|\tau v(\xi)|\right)^{1-\varepsilon_1} 
=\frac{|x|^{\varepsilon_1}|v(\xi)|^{1-\varepsilon_1}}{\varepsilon_1^{\varepsilon_1} (1-\varepsilon_1)^{1-\varepsilon_1}}|\tau|^{1-\varepsilon_1},
\end{align*}
we learn for $|x|^{\varepsilon_1} |v(\xi)|^{1-\varepsilon_1}\geq C_{\varepsilon_1}$
\begin{align}\label{lower bounds}
|q(\tau,0;x,\eta(t,0;x,\xi))|\geq \frac{1}{2}(|x|+|\tau v(\xi)|) , \quad 0\leq \pm\tau\leq \pm t.
\end{align}
Hence the proposition is proved by (\ref{lower bounds}), (\ref{phi relations}), (\ref{phi definition}), (\ref{R representation}) and Lemma \ref{phi estimate}.
\end{proof}

The following proposition says $\phi_\pm$ is a solution to the eikonal equation (\ref{eikonal0}).

\begin{prop}\label{eikonal prop}
For any $a>0$, there exists $R_a>1$ such that $\phi_\pm$ satisfies the eikonal equation
\begin{align}\label{eikonal}
h(x,\nabla_x\phi_\pm(x,\xi))=h_0(\xi)
\end{align}
on the outgoing (or incoming) region
\begin{align*}
\{ (x,\xi) \mid |x|\geq R_a,\ |v(\xi)|\geq a, \pm\cos(x,v(\xi))\geq 0\},
\end{align*}
respectively.
\end{prop}

\begin{proof}
By (\ref{phi relations}) and (\ref{phi definition}), we have
\begin{align*}
\nabla_x \phi_\pm(x,\xi)=\lim_{t\to\pm\infty} \eta(t,0;x,\xi)=\lim_{t\to\pm\infty} p(0,t;y(0,t;x,\xi),\xi) .
\end{align*}
If $|x|\geq 2\rho^{-1}$, then we have by the definition of $V_\rho$
\begin{align}\label{energy1}
h(x,\nabla_x\phi_\pm(x,\xi))=\lim_{t\to\pm\infty} h_\rho(0,x,p(0,t;y(0,t;x,\xi),\xi)) .
\end{align}
Now we claim
\begin{align*}
E(\tau)&:=h_\rho(\tau,q(\tau,t;y(0,t;x,\xi),\xi),p(\tau,t;y(0,t;x,\xi),\xi)) \\
&=h_\rho(\tau,q(\tau,0;x,\eta(t,0;x,\xi)),p(\tau,0;x,\eta(t,0;x,\xi)))
\end{align*}
is a constant for $0\leq\pm \tau\leq\pm t$. A direct calculus implies
\begin{align*}
\frac{dE}{d\tau}(\tau)&=\partial_t h_\rho(\tau,q(\tau,0;x,\eta(t,0;x,\xi)),p(\tau,0;x,\eta(t,0;x,\xi))) \\
&=\partial_t V_\rho(\tau,q(\tau,0;x,\eta(t,0;x,\xi))).
\end{align*}
We note (\ref{lower bounds}) holds on $\{ (x,\xi) \mid |x|\geq R_a,\ |v(\xi)|\geq a,\ \pm\cos(x,v(\xi))\geq0\}$ for $R_a$ large enough, and hence
\begin{align*}
|q(\tau,0;x,\eta(t,0;x,\xi))|&\geq \frac{1}{2}(R_a+a|\tau|) \\
&\geq2\max\{\rho^{-1},\frac{\langle \tau\rangle}{\langle\log\langle \tau \rangle\rangle}\}, \quad 0\leq\pm\tau\leq\pm t.
\end{align*}
We also note $\partial_t V_\rho(t,x)=0$ if $|x|\geq2\max\{\rho^{-1},\frac{\langle t\rangle}{\langle\log\langle t \rangle\rangle}\}$.
Thus we have $\frac{dE}{d\tau}(\tau)=0$ if $ 0\leq \pm\tau \leq \pm t$, in particular,
\begin{align}\label{energy2}
h_\rho(0,x,p(0,t;y(0,t;x,\xi),\xi))&=E(0)=E(t) \\
&=h_\rho(t,y(0,t;x,\xi),\xi). \nonumber
\end{align}
Hence, (\ref{energy1}) and (\ref{energy2}) imply
\begin{align*}
h(x,\nabla_x\phi_\pm(x,\xi))=\lim_{t\to\pm\infty} h_\rho(t,y(0,t;x,\xi),\xi)=h_0(\xi).
\end{align*}
\end{proof}

\begin{proof}[Proof of Proposition \ref{mainp}]
Let $\varphi\in C^\infty(\mathbb{R}^d\times(\mathbb{R}^d\backslash v^{-1}(0)))$ be defined by
\begin{align} \label{def of phase}
\varphi(x,\xi)=(\phi_+(x,\xi)&-x\cdot \xi)\chi_+(x,\xi) \\
&+(\phi_-(x,\xi)-x\cdot\xi)\chi_-(x,\xi)+x\cdot\xi , \nonumber
\end{align}
where
\begin{align}
\chi_\pm(x,\xi)=\chi\left(\mu|v(\xi)|^\ell x\right)\psi_\pm(\cos(x,v(\xi)))
\end{align}
and $\psi_\pm\in C^\infty([-1,1]; [0,1])$ satisfy
\begin{align*}
\psi_\pm(\sigma)=
\begin{cases}
1, \quad \pm\sigma\geq\frac{1}{2},\\
0, \quad \pm\sigma\leq 0.
\end{cases}
\end{align*}
If $\mu$ and $\ell$ are fixed so that $\mu$ is sufficiently small and that $\ell$ is sufficiently large, then $\varphi$ satisfies (\ref{shuuki}), (\ref{phase estimate}) and (\ref{phase estimate2}).

Finally we prove (\ref{s_a}).
Let $s_a$ be defined by (\ref{s_a def}). We decompose $s_a$ by
\begin{align}\label{s decomposition}
s_a(x,\xi)=s_a^1(x,\xi)+s_a^2(x,\xi), 
\end{align}
where
\begin{align*}
s_a^1(x,\xi)&=\sum_{z\in\mathbb{Z}^d}f\left[z\right]e^{i(\varphi_a(x-z,\xi)-\varphi_a(x,\xi))}-h_0(\nabla_x\varphi_a(x,\xi)) , \\
s_a^2(x,\xi)&=h(x,\nabla_x\varphi_a(x,\xi))-h_0(\xi) .
\end{align*}
For $s_a^2$, (\ref{eikonal}) and Assumption \ref{ass2} imply for $|x|\geq R_a$ and $\beta$,
\begin{align}\label{s2 estimate}
\partial_\xi^\beta s_a^2(x,\xi)=
\begin{cases}
0,\  |\cos(x,v(\xi))|\geq\frac{1}{2}, \\
\mathcal{O}(\langle x\rangle^{-\varepsilon}),\  |\cos(x,v(\xi))|\leq\frac{1}{2}.
\end{cases}
\end{align}
For $s_a^1$, we have
\begin{align*}
s_a^1(x,\xi)&=\sum_{z\in\mathbb{Z}^d}f\left[z\right]\left(e^{i(\varphi_a(x-z,\xi)-\varphi_a(x,\xi))}-e^{-iz\cdot\nabla_x\varphi_a(x,\xi)}\right) \\
&=\sum_{z\in\mathbb{Z}^d}f\left[z\right]e^{-iz\cdot\nabla_x\varphi_a(x,\xi)} \left(e^{i\Phi_a(x,\xi,z)}-1\right) ,
\end{align*}
where
\begin{align*}
\Phi_a(x,\xi,z)&=\varphi_a(x-z,\xi)-\varphi_a(x,\xi)+z\cdot\nabla_x\varphi_a(x,\xi) \\
&=z\cdot \left(\int_0^1 \theta_1 \int_0^1 \nabla_x^2\varphi_a(x-\theta_1\theta_2 z,\xi) d\theta_2 d\theta_1\right)z .
\end{align*}
By (\ref{phase estimate}), we observe
\begin{align*}
|\partial_\xi^\beta[e^{-iz\cdot\nabla_x\varphi_a(x,\xi)}]|&\leq C_\beta \langle z\rangle^{|\beta|}
\end{align*}
and
\begin{align*}
|\partial_\xi^\beta \Phi_a(x,\xi,z)|\leq& C_\beta |z|^2 \int_0^1 \theta_1 \int_0^1 \langle x-\theta_1\theta_2 z \rangle^{-1-\varepsilon} d\theta_2 d\theta_1 \\
\leq& C_\beta \langle x\rangle^{-1-\varepsilon} \langle z\rangle^{3+\varepsilon}.
\end{align*}
Thus we obtain
\begin{align}\label{s1 estimate}
|\partial_\xi^\beta s_a^1(x,\xi)|\leq C_\beta \langle x\rangle^{-1-\varepsilon}.
\end{align}
Hence (\ref{s_a}) is proved by (\ref{s decomposition}), (\ref{s2 estimate}) and (\ref{s1 estimate}).
\end{proof}

\section{Proofs of Lemmas \ref{lem} and \ref{lem2}}
\label{proof of lem}

\subsection{Proof of Lemma \ref{lem}}
First we remark that $J_a, P_\pm, \tilde P_\pm$ and their formal adjoint operators
\begin{align*}
J_a^*u[x]&=(2\pi)^{-d}\int_{\mathbb{T}^d}\sum_{y\in\mathbb{Z}^d} e^{i(x\cdot\xi-\varphi_a(y,\xi))}u[y]d\xi, \\
P_\pm^*u[x]&=(2\pi)^{-d}\int_{\mathbb{T}^d}\sum_{y\in\mathbb{Z}^d} e^{i(x-y)\cdot\xi} p_\pm(x,\xi)u\left[y\right]d\xi, \\
\tilde P_\pm^*u[x]&=(2\pi)^{-d}\int_{\mathbb{T}^d}\sum_{y\in\mathbb{Z}^d} e^{i(\varphi_a(x,\xi)-y\cdot\xi)}p_\pm(x,\xi)u\left[y\right]d\xi
\end{align*} 
map from $\mathscr{S}(\mathbb{Z}^d)$ to itself.

Letting $L:=\langle x-y \rangle^{-2} (1+(x-y)\cdot D_\xi)$, $D_\xi:=\frac{1}{i}\nabla_\xi$, we easily see $L\left(e^{i(x-y)\cdot\xi}\right)=e^{i(x-y)\cdot\xi}$. Thus we have
\begin{align*}
P_\pm u\left[x\right]=&(2\pi)^{-d}\int_{\mathbb{T}^d}\sum_{y\in\mathbb{Z}^d} L^k\left(e^{i(x-y)\cdot\xi}\right) p_\pm(y,\xi)u\left[y\right]d\xi \\
=&(2\pi)^{-d}\int_{\mathbb{T}^d}\sum_{y\in\mathbb{Z}^d} e^{i(x-y)\cdot\xi} (L^*)^k \left(p_\pm(y,\xi)\right)u\left[y\right]d\xi
\end{align*}
for any $k \in \mathbb{N}_{\geq 0}$. We define $|p_\pm|:= \sup_{|\beta|\leq d+1}\sup_{(x,\xi)\in\mathbb{Z}^d\times\mathbb{T}^d} |\partial_\xi^\beta p_\pm(x,\xi)|$. Then we learn that, setting $k=d+1$,
\begin{align*}
|P_\pm u\left[x\right]|\leq C |p_\pm|\sum_{y\in\mathbb{Z}^d} \langle x-y \rangle^{-d-1}|u[x]| .
\end{align*}
This and Young's inequality imply $\|P_\pm u\|\leq C|p_\pm|\| u \|$, where $\| u \| := \left(\sum_{x\in\mathbb{Z}^d} |u[x]|^2 \right)^{\frac{1}{2}}$. Hence $P_{\pm}$ are bounded.

Next we prove $\tilde P_\pm$ are bounded. A direct calculus implies
\begin{align*}
\tilde P_\pm^* \tilde P_\pm u[x]&=(2\pi)^{-d}\int_{\mathbb{T}^d}\sum_{y\in\mathbb{Z}^d} e^{i(\varphi_a(x,\xi)-\varphi_a(y,\xi))}p_\pm(x,\xi)p_\pm(y,\xi)u\left[y\right]d\xi \\
&=(2\pi)^{-d}\int_{\mathbb{T}^d}\sum_{y\in\mathbb{Z}^d} e^{i(x-y)\cdot\eta(\xi;x,y)}p_\pm(x,\xi)p_\pm(y,\xi)u\left[y\right]d\xi,
\end{align*}
where $\eta$ in the last equality is defined by
\begin{align}\label{xi to eta}
\eta(\xi;x,y):=\int_0^1 \nabla_x\varphi_a(y+\theta(x-y),\xi)d\theta .
\end{align}
Then (\ref{phase estimate2}) implies $\eta(\cdot;x,y):\mathbb{T}^d\to\mathbb{T}^d$ has its inverse map $\xi(\cdot;x,y)$. Changing the variable $\xi$ to $\eta$, we have
\begin{align*}
\tilde P_\pm^* \tilde P_\pm u[x]=(2\pi)^{-d}\int_{\mathbb{T}^d}\sum_{y\in\mathbb{Z}^d} e^{i(x-y)\cdot\eta} r(x,y,\eta) u\left[y\right] d\eta,
\end{align*}
where
\begin{align*}
r(x,y,\eta)=p_\pm(x,\xi(\eta;x,y))p_\pm(y,\xi(\eta;x,y))\left|\det\left(\frac{d\xi}{d\eta}\right)\right|.
\end{align*}
Since (\ref{phase estimate}) implies
\begin{align} \label{hensuuhenkan}
\left|\partial_{\eta}^\beta\left[\det\left(\frac{d\xi}{d\eta}\right)-1\right]\right|\leq C_\beta \langle x\rangle^{-\varepsilon} ,
\end{align}
the similar argument for $P_\pm$ proves the boundedness of $\tilde P_\pm^* \tilde P_\pm$. Thus, for $u\in \mathscr{S}
(\mathbb{Z}^d)$, we obtain
\begin{align*}
\|\tilde P_\pm u\|^2=|(\tilde P_\pm^* \tilde P_\pm u,u)|\leq \|\tilde P_\pm^* \tilde P_\pm\|\|u\|^2,
\end{align*}
which implies $\tilde P_\pm$ are bounded. The boundedness of $J_a$ is proved similarly.
\qed

\subsection{Proof of Lemma \ref{lem2}}
Since
\begin{align*}
\gamma(H_0) - P_+ - P_-=\gamma(H_0)(1-\chi),
\end{align*}
the compactness of the support of $1-\chi$ implies $P_+ +P_- -\gamma(H_0)$ is a finite rank operator, in particular, a compact operator.

We show $P_\pm^*-P_\pm$ are compact. We observe
\begin{align*}
&(P_\pm^*-P_\pm)u[x] \\
&=(2\pi)^{-d}\int_{\mathbb{T}^d}\sum_{y\in\mathbb{Z}^d} e^{i(x-y)\cdot\xi} (p_\pm(x,\xi)-p_\pm(y,\xi))u\left[y\right]d\xi \\
&=(2\pi)^{-d}\int_{\mathbb{T}^d}\sum_{y\in\mathbb{Z}^d} e^{i(x-y)\cdot\xi} (x-y)\cdot \int_0^1\nabla_x p_\pm(y+\theta(x-y),\xi)d\theta\: u[y]d\xi \\
&=(2\pi)^{-d}i\int_{\mathbb{T}^d}\sum_{y\in\mathbb{Z}^d} e^{i(x-y)\cdot\xi} \int_0^1\nabla_\xi\cdot\nabla_x p_\pm(y+\theta(x-y),\xi)d\theta\: u[y]d\xi,
\end{align*}
where the last equality follows from integral by parts in $\xi$. Since
\begin{align*}
\left|\int_0^1\partial_\xi^\beta[\nabla_\xi\cdot\nabla_x p_\pm(y+\theta(x-y),\xi)]d\theta\right|&\leq C_\beta\int_0^1 \langle y+\theta(x-y)\rangle^{-1} d\theta \\
&\leq C_\beta^\prime \langle x\rangle^{-1},
\end{align*}
similar argument in Lemma \ref{lem} proves $\langle x\rangle (P_\pm^*-P_\pm)$ are bounded. By the compactness of $\langle x \rangle^{-1}$ as an operator on $\mathcal{H}$, $P_\pm^*-P_\pm=\langle x\rangle^{-1} \cdot \langle x\rangle (P_\pm^*-P_\pm)$ are compact.

We next prove the compactness of $E_\pm(0)-P_\pm$. Using (\ref{xi to eta}), we have
\begin{align*}
E_\pm(0)u[x]&=J_a \tilde P_\pm u[x] \\
&=(2\pi)^{-d}\int_{\mathbb{T}^d}\sum_{y\in\mathbb{Z}^d} e^{i(\varphi_a(x,\xi)-\varphi_a(y,\xi))}p_\pm(y,\xi)u\left[y\right]d\xi \\
&=(2\pi)^{-d}\int_{\mathbb{T}^d}\sum_{y\in\mathbb{Z}^d} e^{i(x-y)\cdot\eta}p_\pm(y,\xi(\eta))\left|\det\left(\frac{d\xi}{d\eta}\right)\right|u\left[y\right]d\eta.
\end{align*}
Thus
\begin{align*}
(E_\pm(0)-P_\pm)u[x]=(2\pi)^{-d}\int_{\mathbb{T}^d}\sum_{y\in\mathbb{Z}^d} e^{i(x-y)\cdot\eta}r(x,y,\eta) u\left[y\right]d\eta,
\end{align*}
where
\begin{align*}
r(x,y,\eta)=p_\pm(y,\xi(\eta))\left|\det\left(\frac{d\xi}{d\eta}\right)\right|-p_\pm(y,\eta).
\end{align*}
By (\ref{hensuuhenkan}), we have $|\partial_\eta^\beta[r(x,y,\eta)]|\leq C_\beta \langle x\rangle^{-\varepsilon}$,
and hence $\langle x\rangle^{\varepsilon}(E_\pm(0)-P_\pm)$ are bounded.
This proves $E_\pm(0)-P_\pm$ are compact.

The compactness of $J_a J_a^*-I$ is proved similarly to that of $E_\pm(0)-P_\pm$, since
\begin{align*}
(J_a J_a^*-I)u[x]&=(2\pi)^{-d}\int_{\mathbb{T}^d}\sum_{y\in\mathbb{Z}^d} e^{i(\varphi_a(x,\xi)-\varphi_a(y,\xi))}u\left[y\right]d\xi-u[x] \\
&=(2\pi)^{-d}\int_{\mathbb{T}^d}\sum_{y\in\mathbb{Z}^d} e^{i(x-y)\cdot\eta}\left(\left|\det\left(\frac{d\xi}{d\eta}\right)\right|-1\right) u\left[y\right] d\eta .
\end{align*}

Finally, we prove $J_a^* J_a-I$ is compact. Now we mimic the proof of Lemma 7.1 in \cite{N2}. For $f\in L^2(\mathbb{T}^d)$, we denote
\begin{align*}
L_a f(\xi)&=F J_a^* J_a F^* f(\xi) \\
&=(2\pi)^{-d}\sum_{x\in\mathbb{Z}^d} \int_{\mathbb{T}^d} e^{i(\varphi_a(x,\xi)-\varphi_a(x,\eta))} f(\eta) d\eta, \\
\tilde L_a f(\xi)&= (2\pi)^{-d}\int_{\mathbb{R}^d} \int_{\mathbb{T}^d} e^{i(\varphi_a(x,\xi)-\varphi_a(x,\eta))} f(\eta) d\eta dx .
\end{align*}

First we show that, for any $\psi\in C^\infty(\mathbb{T}^d)$ with sufficiently small support,
\begin{align*}
K_{a,\psi}:=\psi \circ(L_a-\tilde L_a)
\end{align*}
is a compact operator on $L^2(\mathbb{T}^d)$. We define $\Pi :L^1(\mathbb{R}^d)\to L^1(\mathbb{T}^d)$ by
\begin{align*}
\Pi f(\xi):=\sum_{m\in\mathbb{Z}^d} f(\xi+2\pi m).
\end{align*}
Then (\ref{shuuki}) implies
\begin{align*}
\Pi \tilde L_a f(\xi)=& (2\pi)^{-d}\sum_{m\in\mathbb{Z}^d} \int_{\mathbb{R}^d} \int_{\mathbb{T}^d} e^{i(\varphi_a(x,\xi+2\pi m)-\varphi_a(x,\eta))} f(\eta) d\eta dx \\
=& (2\pi)^{-d}\sum_{m\in\mathbb{Z}^d} \int_{\mathbb{R}^d} \int_{\mathbb{T}^d} e^{i(\varphi_a(x,\xi)+2\pi x\cdot m-\varphi_a(x,\eta))} f(\eta) d\eta dx .
\end{align*}
Using Poisson's summation formula
\begin{align} \label{Poisson}
\sum_{m\in\mathbb{Z}^d} e^{2\pi i x \cdot m}=\sum_{m\in\mathbb{Z}^d}\delta_{x-m}
\end{align}
in the sense of distribution, we have
\begin{align*}
\Pi \tilde L_a f(\xi)=& (2\pi)^{-d}\sum_{x\in\mathbb{Z}^d} \int_{\mathbb{T}^d} e^{i(\varphi_a(x,\xi)-\varphi_a(x,\eta))} f(\eta) d\eta = L_a f(\xi).
\end{align*}
Thus we learn
\begin{align*}
K_{a,\psi}f(\xi)=&\psi \circ (\Pi \tilde L_a-\tilde L_a) f(\xi) \\
=&\sum_{m\in\mathbb{Z}^d \backslash \{0\}} \psi(\xi) \int_{\mathbb{R}^d}\int_{\mathbb{T}^d} e^{i(\varphi_a(x,\xi+2\pi m)-\varphi_a(x,\eta))}f(\eta) d\eta dx \\
=& \int_{\mathbb{T}^d} k_{a,\psi}(\xi,\eta) f(\eta) d\eta ,
\end{align*}
where the integral kernel
\begin{align*}
k_{a,\psi}(\xi,\eta)=\sum_{m\in\mathbb{Z}^d \backslash \{0\}} \psi(\xi) \int_{\mathbb{R}^d} e^{i(\varphi_a(x,\xi+2\pi m)-\varphi_a(x,\eta))} dx
\end{align*}
is smooth. This implies the compactness of $K_{a,\psi}$.

In order to show the compactness of $\psi \circ (\tilde L_a-I)$, we note
\begin{align*}
\tilde L_a f(\xi)= (2\pi)^{-d}\int_{\mathbb{R}^d} \int_{\mathbb{T}^d} e^{i\int_0^1 \nabla_\xi \varphi_a(x,\eta+\theta(\xi-\eta))d\theta \cdot (\xi-\eta)} f(\eta) d\eta dx .
\end{align*}
Letting
\begin{align*}
y(x;\xi,\eta):=\int_0^1 \nabla_\xi \varphi_a(x,\eta+\theta(\xi-\eta))d\theta,
\end{align*}
we observe $y(\cdot ;\xi,\eta)$ has its inverse map by (\ref{phase estimate2}). Thus we have
\begin{align*}
\tilde L_a f(\xi)= (2\pi)^{-d}\int_{\mathbb{R}^d} \int_{\mathbb{T}^d} e^{iy \cdot (\xi-\eta)} \left|\det\left(\frac{dx}{dy}\right)\right| f(\eta) d\eta dy .
\end{align*}
This equality and
\begin{align*}
\left|\partial_y^\alpha \partial_\xi^\beta \partial_\eta^\gamma\left[\det\left(\frac{dx}{dy}\right)-1\right]\right| \leq C_{\alpha\beta\gamma} \langle y \rangle^{-|\alpha|-\varepsilon}
\end{align*}
imply the compactness of $\psi \circ (\tilde L_a-I)$.

Hence, with the help of a partition of unity $\{\psi_j\}_{j=1}^J$ on $\mathbb{T}^d$, we observe
\begin{align*}
J_a^* J_a-I=F^*(L_a-I)F=F^* \sum_{j=1}^J \left( K_{a,\psi_j}+\psi_j \circ (\tilde L_a-I) \right)F
\end{align*}
is compact.
\qed

\end{document}